\documentclass[11pt]{article}

\usepackage[a4paper,left=15mm,top=20mm,right=15mm,bottom=20mm]{geometry}

\usepackage{graphicx} 
\usepackage[utf8]{inputenc}
\usepackage{amsmath, amsfonts, amssymb, amsthm}
\usepackage{enumerate}
\usepackage{xcolor}
\usepackage{ stmaryrd }
\usepackage{mathtools}
\usepackage{hyperref}
\usepackage{todonotes}

\usepackage{float}

\usepackage[linesnumbered,boxed,noend]{algorithm2e}

\newtheorem{theorem}{Theorem} 
\newtheorem{lemma}{Lemma}
\newtheorem{corollary}{Corollary}

\newtheorem{proposition}{Proposition}


\definecolor{byzantium}{rgb}{0.74, 0.2, 0.64}

\DeclareMathOperator{\lca}{lca}

\newcommand{\T}{\mathcal{T}}

\newcommand{\G}{\mathcal{G}}


\newcommand{\metricname}{Path-Label Reconciliation (PLR)}
\newcommand{\metricshortname}{PLR}
\newcommand{\dml}{d_{plr}}   

\usepackage{authblk}

\usepackage[authoryear]{natbib}
\bibliographystyle{abbrvnat}
\setcitestyle{authoryear,open={(},close={)}}
\let\cite\citep

\providecommand{\keywords}[1]
{
  \small	
  \textbf{\textit{Keywords---}} #1
}


\begin{document}

\setlength{\marginparsep}{-0.6cm}
\setlength{\marginparwidth}{2.3cm}

\title{The Path-Label Reconciliation (PLR) Dissimilarity Measure for Gene Trees}

\author[1]{Alitzel {L{\'o}pez S{\'a}nchez}}
\author[2,3,4]{José Antonio {Ramírez-Rafael}}
\author[2]{Alejandro {Flores-Lamas}}
\author[2]{Maribel {Hernández-Rosales}}
\author[1]{Manuel Lafond}

\affil[1]{Department of Computer Science, University of Sherbrooke, J1K2R1 Quebec, Canada. \authorcr \texttt{\{manuel.lafond,alitzel.lopez.sanchez\}@usherbrooke.ca}}

\affil[2]{Center for Research and Advanced Studies of the National Polytechnic Institute, Irapuato, Gto., Mexico. \authorcr \texttt{\{jose.ramirezra, alejandrofloreslamas, maribel.hr\}@cinvestav.mx}}

\affil[3]{Department of Computer Science and Interdisciplinary Center for Bioinformatics, University of Leipzig, Germany}

\affil[4]{Max Planck Institute for  Mathematics in the Sciences, Leipzig, Saxony, Germany}

\date{\ }

\setcounter{Maxaffil}{0}
\renewcommand\Affilfont{\scriptsize}

\maketitle

\abstract{   
\textbf{Background:} 
In this study, we investigate the problem of comparing gene trees reconciled with the same species tree using a novel semi-metric, called the Path-Label Reconciliation (\metricshortname) dissimilarity measure. 
This approach not only quantifies differences in the topology of reconciled gene trees, but also considers discrepancies in predicted ancestral gene-species maps and speciation/duplication events,  offering a refinement of existing metrics such as Robinson-Foulds (RF) and their labeled extensions LRF and ELRF.  A tunable parameter $\alpha$ also allows users to adjust the balance between its species map and event labeling components.

\textbf{Our contributions:} 
We show that PLR can be computed in linear time and that it is a semi-metric.  
We also discuss the diameters of reconciled gene tree measures, which are important in practice for normalization, and provide initial bounds on \metricshortname, LRF, and ELRF. To validate \metricshortname, we simulate reconciliations and perform comparisons with LRF and ELRF. The results show that \metricshortname~provides a more evenly distributed range of distances, making it less susceptible to overestimating differences in the presence of small topological changes,
while at the same time being computationally efficient. 
Our findings suggest that the theoretical diameter is rarely reached in practice.  
The \metricshortname~measure advances phylogenetic reconciliation by combining theoretical rigor with practical applicability. Future research will refine its mathematical properties,  explore its performance on different tree types, and integrate it with existing bioinformatics tools for large-scale evolutionary analyses.  
The open source code is available at: \url{https://pypi.org/project/parle/}.
}

\bigskip
\noindent

\keywords{reconciliation;  gene trees ; species trees ; evolutionary scenarios}

\sloppy

\section{Introduction}

During evolution, it is well-known that genes can be duplicated, lost, and transferred, resulting in evolutionary scenarios that differ from the history of the species that contain them.  Gene trees can therefore be discordant with their species trees, and \emph{reconciliation} aims to infer the macro-evolutionary events that explain the discrepancies.  Several models have been proposed to achieve this task, allowing duplications and losses~\cite{goodman1979fitting,zhang1997mirkin,gorecki2006dls,bonizzoni2005reconciling,durand2005hybrid,vernot2008reconciliation,lafond2012optimal,hernandez2012event,geiss2020best}, horizontal gene transfer~\cite{gorecki2004reconciliation,doyon2010efficient,bansal2012efficient,kordi2016exact,jacox2017resolution,scornavacca2017fast,nojgaard2018time,kordi2019inferring,weiner2021improved,schaller2021indirect}, incomplete lineage sorting~\cite{zhang2011gene,stolzer2012inferring,rasmussen2012unified,wu2014most,chan2017inferring,li2021multilocus}, and others (see e.g.~\cite{delabre2018reconstructing,li2019simultaneous,hasic2019gene,boussau2020reconciling,anselmetti2021gene,santichaivekin2021empress,liu2021maximum}).  In addition, some of these models support segmental events that affect multiple genes at once~\cite{page2001vertebrate,burleigh2008locating,bansal2008multiple,paszek2017efficient,dondi2019reconciling}, and some approaches infer histories based on parsimony whereas others are probabilistic~\cite{arvestad2003bayesian,aakerborg2009simultaneous,larget2010bucky}.

This variety of reconciliation models and algorithms is accompanied by a large diversity of software and tools to reconcile gene trees with species trees (examples include NOTUNG~\cite{durand2005hybrid}, DLCoal~\cite{rasmussen2012unified},  RANGER-DTL~\cite{bansal2018ranger}, ecceTERA~\cite{jacox2016eccetera}, Jane~\cite{conow2010jane}).  
Most of these tools infer, for each ancestral gene tree node, the ancestral species to which the gene belonged to, as well as the event that affected the gene.  
It is, however, difficult to assess the quality of the reconciliations produced by these approaches, even with the availability of high quality software to simulate gene tree evolution (e.g. SimPhy~\cite{mallo2016simphy}, Asymmetry~\cite{schaller2022asymmetree}, aevol~\cite{batut2013silico}, ZOMBI~\cite{davin2020zombi}).  A standard benchmarking idea would be to simulate reconciled gene trees and to compare the inferred scenarios with the true simulated ones.  However, it is not straightforward to perform this comparison.
Indeed, reconciled gene trees exhibit three types of valuable information: the tree topology, the gene-species map, and the event labeling.  While there exist metrics to measure discrepancies for each of those three criteria individually, we are not aware of any established method to measure disagreements in all three simultaneously.  
There is a large body of literature on measuring topological differences between trees (e.g. \cite{Puigbo2007-hn}, \cite{Goloboff2017-mr}, \cite{Savage1983-ss},\cite{Makarenkov2000-ki},\cite{Munzner2003-ww}, \cite{Wagle2024}).
In terms of gene-species mapping discordance, the \emph{path distance} metric~\cite{path_dist} applies to gene trees with identical topologies but possibly different species maps, and quantifies how far the species of corresponding nodes are in the gene trees.  The metric was mainly introduced to obtain medians in the reconciliation spaces of gene trees. If the gene trees differ, though, the metric cannot be used.

Perhaps the most relevant metric to compare reconciled gene trees is the recent \emph{labeled Robinson-Foulds (RF) distance}, now called ELRF, which accounts for differences in topology and event labeling.  Given two gene trees, the distance is the minimum number of edge contractions, edge expansions, and node label substitutions required to transform one gene tree into the other~\cite{LRF}.  It is unknown whether this distance can be computed in polynomial time, the main difficulty being that edge operations must have the same label on both endpoints.  The authors then proposed a variant of this metric, called LRF, in which edge contractions/expansions are replaced with node insertions/deletions, which can be computed in linear time~\cite{briand2022linear}. 
Although these are perhaps the only approaches specifically tailored for gene tree comparison, their usage has some disadvantages.  First, these distances do not take gene-species maps into consideration.  Second, the metric suffers from the same well-known shortcomings as the RF distance, see~\cite{lin2011metric} for a discussion on this (for instance, a single misplaced leaf can increase the distance dramatically).  
Another subtle but yet important aspect is the topological uncertainty that can be present in gene trees.  In particular, when ancestral species undergo gene duplication episodes (see e.g.~\cite{gorecki2024unifying,paszek2017efficient}), the corresponding gene trees may contain large duplication subtrees.  In this case, there is too little phylogenetic signal to infer the topology of such duplication subtrees accurately.  However, most approaches penalize discrepancies in those local parts of the gene trees as in any other part, even though predicting different speciation patterns should be more heavily penalized than in duplication clusters.

In this work, we introduce a novel approach for comparing gene trees that considers all the aforementioned components that play a role in reconciliations: the species tree, the gene tree, the labeling of their internal nodes by species and events, as well as duplication clusters. This method effectively circumvents the shortcomings of the RF distance. Given two reconciled gene trees on the same set of genes, our dissimilarity measure establishes a correspondence between the gene tree nodes from both trees and applies a penalty if the matched nodes differ in species or event label. As we demonstrate, due to the constraints inherent in reconciliation models, this approach implicitly penalizes topological disagreements between the gene trees, except when the discordance is solely due to consecutive duplication rounds within the same species.

Our measure also has the advantage of being computable in linear time. We first explore some theoretical properties of our approach and show that it functions as a semi-metric in the space of reconciled gene trees. We demonstrate that if non-binary gene trees are considered, the measure does not necessarily satisfy the triangle inequality, although this remains an open question for binary trees.  We also provide initial results on the diameters of the \metricshortname, LRF, and ELRF measures, which are important in practice for normalization.

We then validate our approach through experiments involving simulated reconciliations on the same set of leaves and calculation of various measures. We show that, as can be expected from previous knowledge, RF, LRF, and ELRF tend to produce large distances overestimating tree differences, which can result from a rapid increase in the distance values when, for example, 
a single leaf is misplaced.
In contrast, our measure effectively captures small, average, and large distances between reconciliations.  Therefore, PLR is established as the first reconciliation measure with greater variability than RF variants, and sensitivity to differences in every component of  evolutionary scenarios. 

Note that due to space constraints, some of the proofs were replaced by a sketch of the main idea, and the full detailed arguments can be found in the Appendix.

\section{Preliminary notions}
A \emph{tree} is a connected acyclic graph.  Unless stated otherwise, all trees in this paper are rooted.
For a tree $T$, we denote by $r(T)$ the root of $T$, by $V(T)$ and $E(T)$ its set of nodes and edges, respectively, and by $L(T)$ its set of leaves.  A non-leaf node is called \emph{internal}. For $u, v \in V(T)$, we write $u \preceq_T v$ if $u$ is a \emph{descendant} of $v$, i.e., if $v$ is on the path between $r(T)$ and $u$ (we write $u \prec_T v$ if $u \neq v$).  Then $v$ is an \emph{ancestor} of $u$. 
 If $u \neq r(T)$, then the \emph{parent} $p_T(u)$ of $u$ if the ancestor $v$ of $u$ such that $uv \in E(T)$, and $u$ is a \emph{child} of $v$. A tree $T$ is \emph{binary} if each internal node has two children, and $T$ is a \emph{caterpillar} if all internal nodes have at most one child that is an internal node (that is, $T$ is a path with leaves attached to its nodes).
 
 For $X \subseteq V(T)$, we denote by $\lca_T(X)$ the \emph{lowest common ancestor} of all the nodes in $X$.  When $|X| = 2$, we may write $\lca_T(u, v)$ instead of $\lca_T(\{u, v\})$.  For $v \in V(T)$, we write $T(v)$ for the subtree of $T$ rooted at $v$.  Note that $L(T(v))$ is the set of leaves that descend from $v$, which we call the \emph{clade} of $v$.  As a shorthand, we may write $L_T(v)$ to denote the clade of $v$, or $L(v)$ if $T$ is understood. The \emph{distance} between two nodes $u, v$ in $T$ is denoted $dist_T(u, v)$, i.e., the length of the undirected path in $T$ between $u$ and $v$.

\subsection{Species trees and reconciled gene trees.}\label{sect:sTree-and-recon}
A \emph{species tree} $S$ is a tree which we assume to be binary.
A \emph{reconciled gene tree} (with $S$) is a tuple $\G = (G, S, \mu, l)$ where $G$ is a tree in which each internal node has at least two children (possibly more), $S$ is a species tree, $\mu : V(G) \rightarrow V(S)$ maps nodes of $G$ to species in $S$, and $l : V(G) \rightarrow \{dup, spec, extant\}$ is an event labeling.  We also have the following requirements: 
\begin{enumerate}
    \item 
    \emph{Leaves are from extant species:} for every leaf $v \in L(G)$, $\mu(v) \in L(S)$ and $l(v) = extant$.  Moreover, every internal node $w \in V(G) \setminus L(G)$ satisfies $l(w) \in \{dup, spec\}$;

    \item 
    \emph{Time-consistency:} for any two nodes $u, v \in V(G)$, $u \preceq_G v$ implies $\mu(u) \preceq_S \mu(v)$;

    \item 
    \emph{Speciations separate species:}
    for any node $v \in V(G)$ such that $l(v) = spec$, we have $\mu(v) \in V(S) \setminus L(S)$ and $v$ has exactly two children $v_1, v_2$.  
    
    Moreover, denoting by $s_1, s_2$ the two children of $\mu(v)$ in $S$, we have that $\mu(v_1) \preceq_S s_1$ and $\mu(v_2) \preceq_S s_2$, or $\mu(v_2) \preceq_S s_1$ and $\mu(v_1) \preceq_S s_2$.
    
\end{enumerate}

If $\mu$ satisfies $\mu(v) = lca_S( \{\mu(x) : x \in L(v) \})$ for every node $v \in V(G)$, then $\mu$ is called the \emph{lca-mapping}~\cite{gorecki2006dls,bonizzoni2005reconciling}.  In this map, all genes map to the lowest possible species according to the rules of reconciliation.  
These concepts are illustrated in Figure~\ref{fig:mainexample}, which presents two reconciled gene trees that use the lca-mapping (see caption).
Note that our reconciled gene trees are not restricted to the \emph{lca-mapping}.
However, it is known that if $l(v) = spec$, then $\mu(v)$ must indeed be the lowest common ancestor of all the species that appear in the genes below $v$.  However, the converse is not required to hold, that is, a duplication could be mapped to the lowest common ancestral species (or above).
\medskip

\noindent
\textbf{Isomorphism between reconciled gene trees.}
Two reconciled gene trees $\G_1 = (G_1, S, \mu_1, l_1)$ and $\G_2 = (G_2, S, \mu_2, l_2)$ are \emph{isomorphic} if they have the same sets of leaves, use the same species tree, have the same topology (i.e., they branch in identical ways), and their corresponding nodes map to the same species and have the same label.  If this holds, we write $\G_1 \simeq \G_2$.  Formally, $\G_1 \simeq \G_2$ if there exists a bijection $\phi : V(G_1) \rightarrow V(G_2)$ such that the following holds:
\begin{itemize}
    \item 
    $L(G_1) = L(G_2)$ and, for each leaf $x \in L(G_1)$, $\phi(x) = x$. 
    In other words, each leaf of $G_1$ is mapped to the same leaf in $G_2$;

    \item 
    $uv \in E(G_1)$ if and only $\phi(u) \phi(v) \in E(G_2)$;

    \item 
    for every node $v \in V(G_1)$, $\mu_1(v) = \mu_2( \phi(v) )$ and $l_1(v) = l_2(\phi(v))$.
 
\end{itemize}

\subsection{The \metricname~dissimilarity measure}

Let $\G_1 = (G_1, S, \mu_1, l_1)$ and $\G_2 = (G_2, S, \mu_2, l_2)$ be two reconciled gene trees.
We say that $\G_1$ and $\G_2$ are \emph{comparable} if: (1) they are reconciled with the same species tree $S$; (2) $L(G_1) = L(G_2)$; and (3) for each leaf $x \in L(G_1)$, $\mu_1(x) = \mu_2(x)$ (that is, extant genes map to the same species in both trees).
Unless stated otherwise, we assume that all pairs of reconciled trees mentioned are comparable, although (3) could be dropped, see remark below.

For a node $v \in V(G_1)$, we need a corresponding node for $v$ in $G_2$.  
This can be done in multiple ways, and here we assign this corresponding node as the lowest possible node of $G_2$ that is an ancestor of all the descendants of $v$.   
To put it more formally, define 
\[
m_{\G_1, \G_2}(v) = lca_{G_2}( L(G_1(v)) )
\]
which is the lowest common ancestor in $G_2$ of the clade of $v$.  Note that this is well-defined since $L(G_1) = L(G_2)$.  For instance in Figure~\ref{fig:mainexample}, $m_{\G_1, \G_2}(x_1) = y_0$.
When $\G_1, \G_2$ are clear from the context, we may write $m(v)$ instead of $m_{\G_1,\G_2}(v)$.
In essence, this is the lca-mapping, but applied between two gene trees.  Note that such mappings are usually applied between gene and species trees, but~\cite{kuitche2017reconstructing} also introduced the ancestral gene-gene map idea (or more specifically, ancestral RNA-gene maps).

Our measure has two components: one for the discrepancies in the species mappings, and one for the labelings.  These components are defined as:
\begin{align*}
    d_{path}(\G_1, \G_2) &= \sum_{v \in V(G_1)}  dist_S( \mu_1(v), \mu_2( m(v)) ) 
    \\
    d_{lbl}(\G_1, \G_2) &= | \{ v \in V(G_1) : l_1(v) \neq l_2(m(v)) \} |
\end{align*}
In words, in $d_{path}$, each term $dist_S(\mu_1(v), \mu_2(m(v)))$ penalizes $v$ by how far its species is from the species of its correspondent $m(v)$, and $d_{lbl}$ is simply the number of nodes of $G_1$ whose label differ from their correspondent in $G_2$.

\begin{figure}[t]
    \centering
    \includegraphics[width=\textwidth]{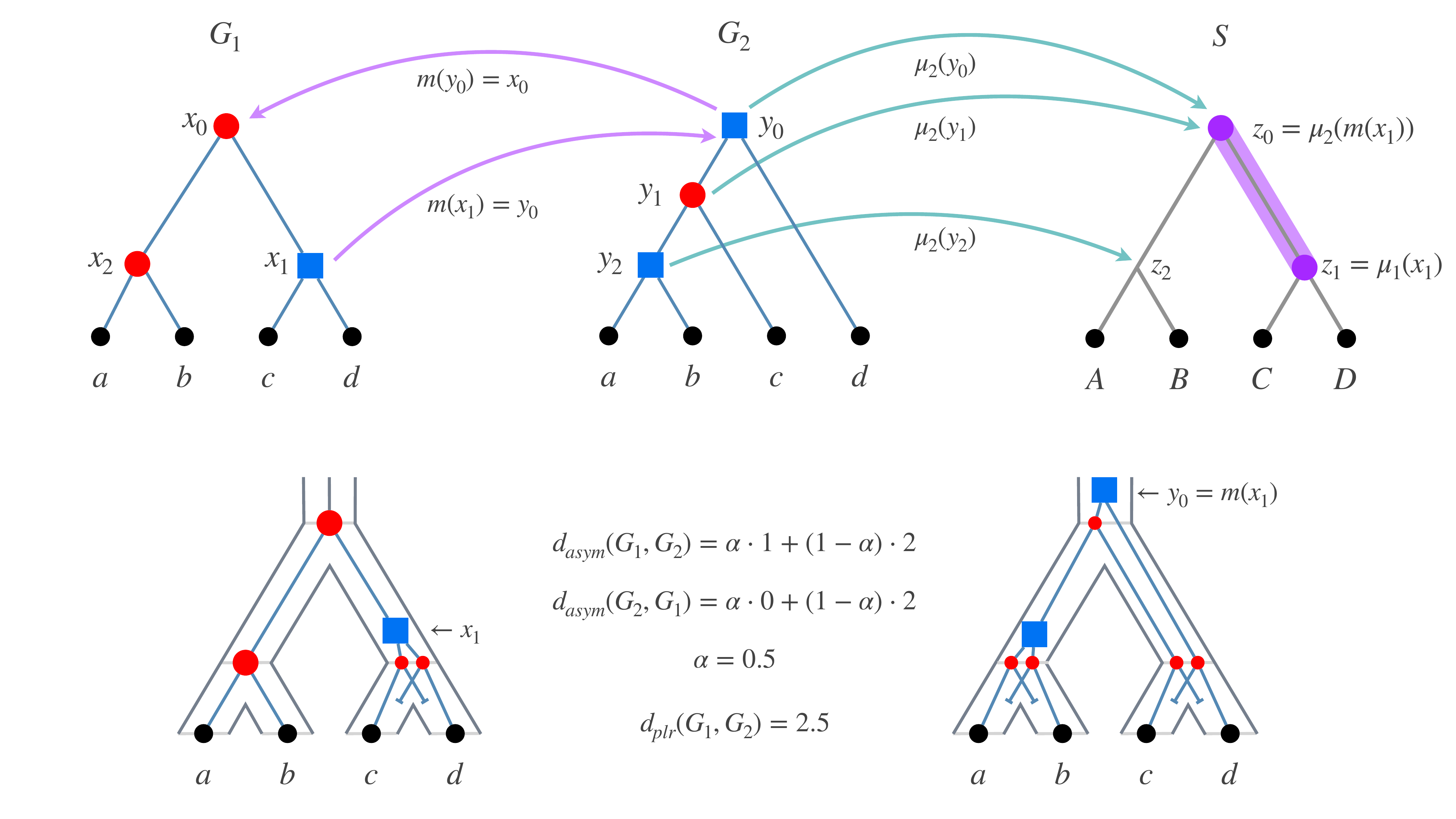}
    \caption{In the upper row, there are two reconciled gene trees $G_1$ and $G_2$ as well as a species tree $S$.  The event labelings are shown as red circles and blue squares, which represent speciations and duplications, respectively. Lowercase letters $a,b,c,d$ depict extant genes, while the corresponding uppercase letters are the species where genes reside. 
    The maps $\mu_1, \mu_2$ use the lca-mapping, that is, $\mu_1(x_0) = z_0, \mu_1(x_1) = z_1, \mu_1(x_2) = z_2$, and $\mu_2(y_0) = \mu_2(y_1) = z_0, \mu_2(y_2) = z_2$.
    The gene trees have the same set of leaves but different topology and event labeling.
    Purple arrows exemplify the maps $m_{\G_1, \G_2}(x_1)$, which is the lca of genes $c$ and $d$, and $m_{\G_2, \G_1}(y_0)$, while green arrows illustrate the species map $\mu_2$.
    The shaded edge in $S$ displays the path distance between $\mu_1(x_1) = z_1$ and $\mu_2(m(x_1)) = \mu_2(y_0) = z_0$.
    The lower row shows the explicit evolution of the gene trees within the species tree.
    The contribution of $x_1$ to the $d_{path}$ component is $1$, because $dist_S( \mu_1(x_1), \mu_2( m(x_1)) ) = 1$, whereas its contribution to $d_{lbl}$ is $0$ because $l(x_1) = l(m(x_1)) = dup$.  On the other hand, the node $y_0$ from $G_2$ contributes $0$ to $d_{path}$ since its correspondent $x_0$ is mapped to the same species, but contributes $1$ to $d_{lbl}$ since $l(y_0) = dup$ and $l(x_0) = spec$.
    }
    \label{fig:mainexample}
\end{figure}

We assume the existence of a given parameter $\alpha \in [0, 1]$ to weigh these components, and define the \emph{asymmetric dissimilarity} between $\G_1$ and $\G_2$ as: 
\[
d_{asym}(\G_1, \G_2) = \alpha \cdot d_{path}(\G_1, \G_2) + (1 - \alpha) \cdot d_{lbl}(\G_1, \G_2).
\]
Note that when $\alpha = 1$ and $G_1, G_2$ have the same topology, then $d_{asym}$ is exactly the \emph{path distance metric} studied in~\cite{path_dist}.  Our dissimilarity measure generalizes this by allowing trees with different topologies and by considering node labels.  One could ignore the $\alpha$ parameter by weighing $d_{path}$ and $d_{lbl}$ equally, which can be achieved with $\alpha = 0.5$.
Also notice that $d_{path}$ may be adapted to species trees with branch lengths.

It is easy to see that $d_{asym}$ is not symmetric.  For instance, suppose that $\G_1$ consists of a binary gene tree with several internal nodes mapping to different species,  and $\G_2$ consists of a star tree with a single internal node, such that both roots are duplications that map to the same species.  Then $d_{asym}(\G_1, \G_2)$ can be proportional to the number of internal nodes of $G_1$, whereas $d_{asym}(\G_2, \G_1) = 0$.  

The \metricname~dissimilarity is therefore defined as 
\[
\dml(\G_1, \G_2) = d_{asym}(\G_1, \G_2) + d_{asym}(\G_2, \G_1)
\]

If $\G_1$ and $\G_2$ are not comparable, then we define $\dml(\G_1, \G_2) = \infty$.

In Figure~\ref{fig:mainexample} we exemplify all the components of the dissimilarity measure.
In the example, following the $\mu_1, \mu_2$ maps given in the caption,  if we count the respective costs of $x_0, x_1, x_2$, we have $d_{path}(\G_1, \G_2) = 0 + 1 + 0 = 1$ and $d_{lbl}(\G_1, \G_2) = 1 + 0 + 1 = 2$.
If we put $\alpha = 0.5$, we get $d_{asym}(\G_1, \G_2) = 0.5 \cdot 1 + 0.5 \cdot 2 = 1.5$.
As for the costs of $y_0, y_1, y_2$, we get 
$d_{path}(\G_2, \G_1) = 0 + 0 + 0$ and $d_{lbl}(\G_2, \G_1) = 1 + 0 + 1 = 2$, and thus $d_{asym}(\G_2, \G_1) = 1$.  
Therefore, $\dml(\G_1, \G_2) = 2.5$.

\medskip

\noindent
\textbf{A remark on leaves belonging to the same species.}
Recall that condition (3) of comparability requires $\mu_1(x) = \mu_2(x)$ for every leaf $x \in L(G_1)$. 
Although this assumption usually follows from the knowledge of the species of a gene, it may not hold in some contexts.  Indeed, in metagenomics even the species of extant genes is unknown and needs to be inferred (see for example~\cite{gorecki2024unifying}). 
 Therefore, for an extant gene $x$, two different reconciliation algorithms may predict that $x$ belongs to a different species, leading to $\mu_1(x) \neq \mu_2(x)$.  Although condition (3) is useful in the proofs that follow, we note that it is not required in the definition of $\dml$, and the latter remains well-defined even if we drop this condition. 
Therefore, $\dml$ could be used to also compare gene trees with predicted gene-species maps that differ even at the level of leaves (although the theory developed hereafter may need revision for this case).

\medskip

\noindent
\textbf{A remark on setting $\alpha$.}\label{sect:alphaRemark}
The reader may notice that if $\alpha$ is ignored in $\dml$, or set to a constant, the $d_{path}$ component can easily outweigh the $d_{lbl}$ component. 
This is because in the worst case, $d_{path}(\G_1, \G_2)$ can be in $\Theta(nm)$, where $n$ is the number of species leaves and $m$ is the number of gene tree leaves, which occurs if most nodes of $\G_1$ are mapped to nodes of $\G_2$ with $\Theta(n)$ path distance in $S$ (see the diameter section for a detailed analysis).  On the other hand, the $d_{lbl}(\G_1, \G_2)$ component is always $O(m)$, as it only depends on the number of nodes in the gene tree.
This quadratic-versus-linear effect can be prevented by making $\alpha$ depend on $n$.  For instance, one may put $\alpha = 1/n$, or more generally $\alpha = c/n$ for some constant $c$.

\medskip

\noindent
\textbf{A remark on scenarios with horizontal transfer events.}\label{sect:transRemark}
In the presence of horizontal gene transfers, gene tree nodes can also undergo a \emph{transfer} event, and a different notion of time-consistency than ours is typically used (see e.g.~\cite{Njgaard2018}).  Nonetheless, such reconciliations also include a gene-species map $\mu$ and a labeling function $l$, and $\dml$ is also well-defined in this context.  On the other hand, it is unclear whether path distances are appropriate to compare transferred genes, and again, the theory that follows may need to be adapted to allow transfers.

\paragraph*{Least duplication-resolved gene trees.}
Consider a reconciled gene tree $\G = (G, S, \mu, l)$.  
If, in $G$, there is a connected subtree consisting only of duplication nodes, all mapped to the same species, then it is difficult to postulate on the exact topology of the duplication subtree due to the lack of clear phylogenetic signals.  
One solution is to contract the subtree into a single node to model the uncertainty. 
Contracting weakly supported branches in gene trees can be useful to detect and correct errors in dubious duplication nodes~\cite{lafond2013error}. Moreover, special cases of least-duplication resolved trees such as discriminating co-trees arise in the context of orthology detection \cite{Hellmuth2012,Gei2020}.
To this end, we say that an edge $uv \in E(G)$ is \emph{redundant} if $\mu(u) = \mu(v)$ and $l(u) = l(v) = dup$.  We then say that $\G$ is \emph{least duplication-resolved} if no edge $uv$ of $G$ is redundant.

Suppose that $\G$ is \emph{not} least duplication-resolved, and let $uv \in E(G)$ be a redundant edge, with $u = p_G(v)$.  We denote by $\G / uv$ the reconciled gene tree obtained by contracting $uv$ in $G$ and updating $\mu$ and $l$ accordingly.  More specifically, $\G / uv = (G', S, \mu', l')$, where: $G'$ is obtained from $G$ by deleting $v$ and its incident edges and, for each child $v'$ of $v$ in $G$, adding the edge $uv'$; and then putting $\mu'(w) = \mu(w)$ and $l'(w) = l(w)$ for every $w \in V(G')$.  If $R \subseteq E(G)$ is a set of redundant edges of $\G$, then $\G / R$ is the reconciled gene tree obtained after contracting every edge in $R$, in any order.
If $R$ is the set of all redundant edges of $\G$, then we define $LR(\G) = \G / R$, called the \emph{least duplication-resolved subtree} of $\G$.
It is not difficult to see that such a subtree is unique, least duplication-resolved, and satisfies all conditions of a reconciled gene tree.
Figure~\ref{fig:least-res} shows two gene trees and their least duplication-resolved version (note that two consecutive duplications in distinct species remain).

For two reconciled gene trees $\G_1, \G_2$, we write $\G_1 \simeq_d \G_2$ if $LR(\G_1) \simeq LR(\G_2)$.  This means that $\G_1$ and $\G_2$ may differ, but every form of disagreement is due to redundant edges, and they become identical in their least duplication-resolved form.  The following will be useful.

\begin{lemma}\label{lem:dup-lr-structure}
    Let $\G = (G, S, \mu, l)$ be a reconciled gene tree that is least duplication-resolved. Let $u,v \in V(G)$ be such that $v \prec_G u$.  
    Then either $\mu(u) \neq \mu(v)$ or $l(u) \neq l(v)$.
\end{lemma}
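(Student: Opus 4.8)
The plan is to argue by contradiction: assume there exist $u, v$ with $v \prec_G u$, $\mu(u) = \mu(v)$, and $l(u) = l(v)$, and derive a contradiction. First I would observe that since $v$ is a strict descendant of $u$, the node $u$ is internal, so by the first reconciliation condition its common label with $v$ is either $spec$ or $dup$. Let $w_0 = u, w_1, \dots, w_k = v$ be the nodes on the path from $u$ down to $v$, with $k \geq 1$ and $w_{i+1}$ a child of $w_i$. By time-consistency, $\mu(w_k) \preceq_S \mu(w_{k-1}) \preceq_S \cdots \preceq_S \mu(w_0)$, and since the endpoints satisfy $\mu(w_0) = \mu(u) = \mu(v) = \mu(w_k)$, the whole chain collapses: every node on the path maps to the single species $\mu(u)$.

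The key fact I would isolate is that a speciation node always maps its children strictly below itself. Indeed, if $w$ is a speciation, condition 3 gives it exactly two children whose images lie below distinct children $s_1, s_2$ of $\mu(w)$ in $S$; hence for either child $w'$ we have $\mu(w') \preceq_S s_i \prec_S \mu(w)$, so $\mu(w') \prec_S \mu(w)$, and by time-consistency any strict descendant of $w$ then also maps strictly below $\mu(w)$. Applying this to the path: if any $w_i$ with $i < k$ were a speciation, its strict descendant $v = w_k$ would satisfy $\mu(v) \prec_S \mu(w_i) = \mu(u)$, contradicting $\mu(v) = \mu(u)$. Taking $i = 0$ rules out the case $l(u) = spec$, and for general $i$ this forces every internal node $w_0, \dots, w_{k-1}$ on the path to be a duplication.

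It then remains to treat the surviving case, where the common label is $dup$. By the previous step $w_0, \dots, w_{k-1}$ are all duplications, and $w_k = v$ is a duplication by assumption, while all of them map to $\mu(u)$. Hence the edge $w_0 w_1 = u w_1$ satisfies $\mu(w_0) = \mu(w_1)$ and $l(w_0) = l(w_1) = dup$, i.e.\ it is a redundant edge, contradicting the hypothesis that $\G$ is least duplication-resolved; this closes the argument. The main obstacle, and the only place where the reconciliation axioms do real work, is the ``speciations push their children strictly down'' step; once that is established, the $spec$ case is immediate and the $dup$ case reduces to exhibiting a single redundant edge, so the least-duplication-resolved hypothesis is precisely what is needed and nothing more.
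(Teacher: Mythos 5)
Your proof is correct and follows essentially the same route as the paper's: both decompose the $u$-to-$v$ path, use the \emph{speciations separate species} condition to show any speciation on the path forces a strict drop in the species map (contradicting equality via time-consistency), and then dispatch the all-duplication case using the least duplication-resolved hypothesis. The only cosmetic differences are that you argue by contradiction (collapsing the species chain first) and exhibit the redundant edge at $u$ rather than at $v$, whereas the paper argues directly; the substance is identical.
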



\begin{proof}
    Let $u = u_1, u_2, \ldots, u_k = v$ be the path from $u$ to $v$ in $G$.  
    Suppose that there is a speciation on the path, that is, there is some $i \in \{1, 2, \ldots, k - 1\}$ such that $l(u_i) = spec$.
    By the \emph{speciations separate species} requirement, denoting $s = \mu(u_i)$ and letting $s_1, s_2$ be the children of $s$ in $S$, we have $\mu(u_{i+1}) \preceq s_1$ or $\mu(u_{i+1}) \preceq s_2$.  Either way, $\mu(u_{i+1}) \prec \mu(u_i)$.  
    By the \emph{time-consistency} requirement, we then have 
    \[
    \mu(v) = \mu(u_k) \preceq \mu(u_{k-1}) \preceq \ldots \preceq \mu(u_{i+1}) \prec \mu(u_i) \preceq \mu(u_{i-1}) \preceq \ldots \preceq \mu(u_1) = \mu(u)
    \]  
    The presence of a $\prec$ in this chain implies $\mu(v) \neq \mu(u)$, as desired.

    So suppose that $l(u_1) = \ldots = l(u_{k-1}) = dup$.  
    If $l(v) \neq dup$, we are done, so assume $l(v) = l(u_k) = dup$.  By the definition of duplication least-resolved, we must have $\mu(u_k) \neq \mu(u_{k-1})$.
    By time-consistency, this implies $\mu(v) = \mu(u_k) \prec \mu(u_{k-1}) \preceq \mu(u_1) = \mu(u)$ and we are done.
\end{proof}

\section{Properties of the \metricname~dissimilarity}
We first show that in terms of time complexity, $\dml(\G_1, \G_2)$ can be computed in linear time, using appropriate data structures, in a very straightforward manner as shown in Algorithm~\ref{alg:dg1g2}.  
The details of a linear-time implementation can be found in the proof of Theorem~\ref{thm:linear-time}.

\begin{algorithm}[h]
\SetAlgoNoLine
\SetKwProg{Fn}{function}{}{}
\Fn{getAsymmetricDist($\G_1 = (G_1, S, \mu_1, l_1), \G_2 = (G_2, S, \mu_2, l_2)$, $\alpha$)}{

    $d_{path} \gets 0, d_{lbl} \gets 0$\;
    $m \gets lcamap(G_1, G_2)$\tcp*{Computes all $m(v) = lca_{G_2}( L(G_1(v)) )$}

	\ForEach{$v \in V(G_1)$}
	{
		$v' \gets m(v)$\;
		$d_{path} \gets d_{path} + dist_S(\mu_1(v), \mu_2(v'))$\;
		\lIf{$l_1(v) \neq l_2(v')$}{$d_{lbl} \gets d_{lbl} + 1$}
	}
	return $\alpha \cdot d_{path} + (1 - \alpha) \cdot d_{lbl}$\;
   
 }
\caption{Computing $d_{asym}$ in one direction.}
\label{alg:dg1g2}
\end{algorithm}

\begin{theorem}\label{thm:linear-time}
   The value $\dml(\G_1, \G_2)$ can be computed in time $O(|V(G_1)| + |V(G_2)| + |V(S)|)$.
\end{theorem}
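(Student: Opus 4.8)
The plan is to analyze Algorithm~\ref{alg:dg1g2} and argue that, with the right preprocessing, each of its components runs within the stated linear bound, and that computing $\dml$ amounts to two calls of \texttt{getAsymmetricDist} (one in each direction) plus a comparability check. First I would handle the comparability test: verifying that both trees use the same species tree, that $L(G_1)=L(G_2)$, and that $\mu_1(x)=\mu_2(x)$ on leaves can all be done in $O(|V(G_1)|+|V(G_2)|)$ time by a single traversal (using a hash map or radix-style bucketing on leaf labels to match leaves across the two trees). If comparability fails we return $\infty$; otherwise we proceed to the two asymmetric computations.

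The core of the argument is that the loop in Algorithm~\ref{alg:dg1g2} performs $|V(G_1)|$ iterations, each doing $O(1)$ work, provided that (i) the map $m(v)=\lca_{G_2}(L(G_1(v)))$ is precomputed for all $v$, and (ii) each distance query $dist_S(\mu_1(v),\mu_2(m(v)))$ is answered in $O(1)$. For (ii), I would preprocess $S$ with a standard constant-time lowest-common-ancestor data structure (e.g.\ the Bender--Farach-Colton scheme) together with depth labels, so that $dist_S(a,b)=\mathrm{depth}(a)+\mathrm{depth}(b)-2\,\mathrm{depth}(\lca_S(a,b))$ is computed in $O(1)$; this preprocessing costs $O(|V(S)|)$. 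For (i), the key is the subroutine $lcamap(G_1,G_2)$. The idea is to compute, in a single bottom-up pass over $G_1$, the value $m(v)$ for every $v$: at a leaf $x$, $m(x)$ is the leaf of $G_2$ with the same label; at an internal node $v$ with children $v_1,\dots,v_k$, we have $\lca_{G_2}(L(G_1(v)))=\lca_{G_2}(m(v_1),\dots,m(v_k))$, which can be folded as a sequence of pairwise $\lca_{G_2}$ queries. Equipping $G_2$ with the same constant-time LCA structure (cost $O(|V(G_2)|)$) makes each pairwise query $O(1)$, and since the total number of (child, parent) incidences summed over all internal nodes of $G_1$ is $O(|V(G_1)|)$, the whole map is built in $O(|V(G_1)|+|V(G_2)|)$ time.

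Putting the pieces together, I would argue: preprocessing $S$, $G_1$, $G_2$ for constant-time LCA costs $O(|V(S)|+|V(G_1)|+|V(G_2)|)$; the comparability check and leaf-matching cost $O(|V(G_1)|+|V(G_2)|)$; computing $m_{\G_1,\G_2}$ and $m_{\G_2,\G_1}$ each cost $O(|V(G_1)|+|V(G_2)|)$; and the two loops each cost $O(|V(G_1)|)$ and $O(|V(G_2)|)$ respectively with $O(1)$ per iteration. Summing and invoking the definition $\dml=d_{asym}(\G_1,\G_2)+d_{asym}(\G_2,\G_1)$ yields the bound $O(|V(G_1)|+|V(G_2)|+|V(S)|)$.

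I expect the main obstacle to be the careful justification of the $lcamap$ subroutine, specifically the claim that the iterated pairwise fold $\lca_{G_2}(m(v_1),\dots,m(v_k))$ indeed equals $\lca_{G_2}(L(G_1(v)))$ and that the total work is linear despite nodes of arbitrary (non-binary) degree. The equality follows from associativity of $\lca$ and the recursive identity $L(G_1(v))=\bigcup_i L(G_1(v_i))$, but this should be stated cleanly; the linearity follows because each edge of $G_1$ contributes exactly one pairwise query to the fold at its parent, so the total number of queries is $|E(G_1)|=|V(G_1)|-1$. A secondary point worth a sentence is ensuring the leaf correspondence used to initialize the bottom-up pass is genuinely $O(1)$ per leaf after an $O(|V(G_1)|+|V(G_2)|)$ bucketing step, rather than incurring a logarithmic factor from comparison-based lookups.
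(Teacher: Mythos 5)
Your proposal is correct and follows essentially the same route as the paper's proof: constant-time LCA preprocessing of all three trees, depth labels on $S$ for $O(1)$ distance queries, and a bottom-up fold of pairwise $\lca_{G_2}$ queries to build $m$ with the total query count charged to the edges of $G_1$. The only addition is your explicit linear-time comparability check, which the paper leaves implicit but which fits within the same bound.
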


\begin{proof}
   We argue that Algorithm~\ref{alg:dg1g2} can be implemented to run in time $O(|V(G_1)| + |V(G_2)| + |V(S)|)$, 
   which clearly proves the statement since we only need to run it twice (once for $\G_1$ versus $\G_2$, and once for $\G_2$ versus $\G_1$).
   We assume that $G_1$, $G_2$, and $S$ are pre-processed to answer lowest common ancestor queries between any two nodes in constant time.  
   This pre-processing time is linear for each tree~\cite{bender2000lca}, and therefore this step takes time $O(|V(G_1)| + |V(G_2)| + |V(S)|)$.
   We also assume that we know the depth of each node $x$ of $S$, denoted $depth(x)$, which is the distance between $x$ and the root.  This can easily be computed by a linear-time preorder traversal of $S$.
   It is not difficult to compute $m = lcamap(G_1, G_2)$ in time $O(|V(G_1)| + |V(G_2)|)$ using the $lca$ pre-processing and dynamic programming.  Indeed, for a gene tree node $v \in V(G_1)$ with children $v_1, \ldots, v_l$, we have $m(v) = lca_{G_2}( \{m(v_1), \ldots, m(v_l) \})$.  The latter $lca$ expression can be computed with $l - 1$ $lca$ queries as follows.   Define $w_{1, i} = lca_{G_2}( \{m(v_1), \ldots, m(v_i)\} )$.  First compute $w_{1,2} = lca_{G_2}(m(v_1), m(v_2))$, then $w_{1,3} = lca_{G_2}(w_{12}, m(v_3))$, and so on until $m(v) = w_{1, l} = lca_{G_2}(w_{1,l-1}, m(v_l))$, each in $O(1)$ time.  Since $l$ is the number of edges between $v$ and its children, the number of $lca$ queries required throughout the execution of the whole algorithm is less than the number of edges of $G_1$, which is $O(|V(G_1)|)$.
   
   For each $v \in V(G_1)$, we can obtain $dist_S(\mu_1(v), \mu_2(v'))$ in constant time, since it is equal to $depth(\mu_1(v)) + depth(\mu_2(v')) - 2 \cdot depth( lca_S(\mu_1(v), \mu_2(v') ) )$.
   It follows that each $v \in V(G_1)$ can be dealt with in $O(1)$ time and the loop of the algorithm takes time $O(|V(G_1)|)$, which does not add to the complexity.
\end{proof}

\subsection*{A semi-metric under least duplication-resolved equivalence}
Let us recall the mathematical notion of a metric, which can be defined as a triple $(X, d, \equiv)$ where $X$ is a set, $d : X \times X \rightarrow \mathbb{R}$ is a dissimilarity function, and $\equiv$ is a binary equality operator on $X$, such that the following conditions are satisfied:
\begin{itemize}
    \item 
    (identity) for all $x \in X$, $d(x, x) = 0$;

    \item 
    (positivity) for all $x, y \in X$, if $x \not \equiv y$, then $d(x, y) > 0$;

    \item 
    (symmetry) for all $x, y \in X$, $d(x, y) = d(y, x)$;
    
    \item 
    (triangle inequality)
    for all $x, y, z \in X$, $d(x, z) \leq d(x, y) + d(y, z)$.
\end{itemize}

If all the above conditions are satisfied, except the triangle inequality, then $(X, d, \equiv)$ is a \emph{semi-metric}.
If $X$ is clear from the context, we call $\emph{d}$ a \emph{metric (or semi-metric) under $\equiv$}.

In our case, we consider the set of all reconciled gene trees, with $\dml$ as our dissimilarity function.  As for the equality operator, we may consider $\simeq$ or $\simeq_d$.
In general, $\dml$ does not always meet the \emph{positivity} requirement under $\simeq$.  That is, $\G_1 \not\simeq \G_2$ does not necessarily imply $\dml(\G_1, \G_2) > 0$.
Consider for example two gene trees with different topologies, but whose internal nodes are all duplications in the same species (in which case all internal nodes incur a path and label penalty of $0$).  For a more elaborate example, see Figure~\ref{fig:least-res}.

\begin{figure}
    \centering
    \includegraphics[width=0.85\textwidth]{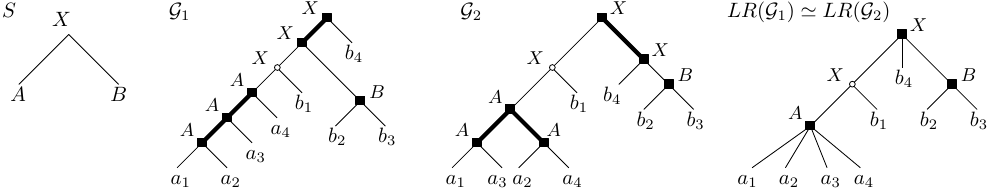}
    \caption{Two different reconciled gene trees $\G_1, \G_2$, where redundant edges are bold (again, lowercase letters indicate the species).  Their $\dml$ value is $0$ (one can check that all duplications in species $W \in \{A, X, B\}$ in either tree maps to a duplication in the same $W$ in the other tree, and the $X$ speciation to an $X$ speciation.  On the right, the least duplication-resolved version of the trees, showing that $\G_1 \simeq_d \G_2$.}
    \label{fig:least-res}
\end{figure}

However, we can show that $\dml$ is a semi-metric under $\simeq_d$.  The most difficult part is to show that $\G_1 \not \simeq_d \G_2$ implies $\dml(\G_1, \G_2) > 0$.  We first need to show that contracting the trees to their least duplication-resolved form cannot increase the dissimilarity.

\begin{lemma}\label{lem:ldr-cost}
    Let $\G_1 = (G_1, S, \mu_1, l_1), \G_2 = (G_2, S, \mu_2, l_2)$ be comparable reconciled gene trees, and let $uv \in E(G_1)$ be a redundant edge.  Then $\dml(\G_1, \G_2) \geq \dml(\G_1 / uv, \G_2)$.
\end{lemma}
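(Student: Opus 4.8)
The plan is to split $\dml$ into its two asymmetric halves and track each direction separately under the contraction $\G_1 \mapsto \G_1/uv$, where $u = p_{G_1}(v)$. The starting observation is that $V(G_1/uv) = V(G_1) \setminus \{v\}$ and that contracting the internal edge $uv$ leaves the clade of every surviving node unchanged, i.e. $L_{G_1/uv}(w) = L_{G_1}(w)$ for all $w \in V(G_1) \setminus \{v\}$ (the leaves below $u$, below a descendant of $v$, or below any unrelated node are all preserved). I will use this to control both correspondence maps $m_{\G_1,\G_2}$ and $m_{\G_2,\G_1}$.

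First I would handle the direction $d_{asym}(\cdot\,, \G_2)$. Since every surviving node $w \neq v$ keeps its clade, its correspondent $m_{\G_1,\G_2}(w) = lca_{G_2}(L_{G_1}(w))$ is identical in $\G_1$ and in $\G_1/uv$. Hence the sums defining $d_{path}(\cdot\,,\G_2)$ and $d_{lbl}(\cdot\,,\G_2)$ agree term-by-term over $V(G_1)\setminus\{v\}$, and contraction simply deletes the single term contributed by $v$. As the path term $dist_S(\mu_1(v),\mu_2(m(v)))$ and the label indicator are both nonnegative, this yields $d_{asym}(\G_1,\G_2) \geq d_{asym}(\G_1/uv,\G_2)$.

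The harder direction is $d_{asym}(\G_2, \cdot)$, where the correspondents $m_{\G_2,\G_1}(y) = lca_{G_1}(L_{G_2}(y))$ now point into the tree being contracted. The crux is an LCA-tracking claim: for every leaf set $X \subseteq L(G_1)$, one has $lca_{G_1/uv}(X) = u$ whenever $lca_{G_1}(X) = v$, and $lca_{G_1/uv}(X) = lca_{G_1}(X)$ otherwise. I expect establishing this claim to be the main obstacle, as it needs a short case analysis on the position of $a := lca_{G_1}(X)$: if $a = v$, then $X$ spans at least two children of $v$, which become children of $u$ after contraction, pushing the new LCA up to exactly $u$; if $a = u$, or $a$ is a strict descendant of $v$, or $a$ is unrelated to the contracted edge, the spanning structure that determines the LCA is preserved, so $a$ is unchanged. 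Consequently the only correspondents affected are the nodes $y \in V(G_2)$ with $m_{\G_2,\G_1}(y) = v$, for which $m_{\G_2,\G_1/uv}(y) = u$.

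Granting the claim, redundancy closes the argument at once. For each affected $y$, the path term changes from $dist_S(\mu_2(y),\mu_1(v))$ to $dist_S(\mu_2(y),\mu_1'(u)) = dist_S(\mu_2(y),\mu_1(u))$, and these are \emph{equal} because $\mu_1(u)=\mu_1(v)$; likewise the label indicator is unchanged because $l_1(u)=l_1(v)=dup$. All other terms are untouched, so $d_{asym}(\G_2,\G_1) = d_{asym}(\G_2,\G_1/uv)$ holds with equality. Combining the two directions then gives $\dml(\G_1,\G_2) = d_{asym}(\G_1,\G_2) + d_{asym}(\G_2,\G_1) \geq d_{asym}(\G_1/uv,\G_2) + d_{asym}(\G_2,\G_1/uv) = \dml(\G_1/uv,\G_2)$, which is the desired inequality.
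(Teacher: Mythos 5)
Your proof is correct and follows essentially the same route as the paper's: split $\dml$ into its two asymmetric halves, note that surviving nodes keep their clades (so the $\G_1$-to-$\G_2$ direction only loses the nonnegative term of $v$), and show that in the $\G_2$-to-$\G_1$ direction correspondents either stay fixed or move from $v$ to $u$, where redundancy ($\mu_1(u)=\mu_1(v)$, $l_1(u)=l_1(v)$) makes the change cost-neutral. Your explicit LCA-tracking case analysis is just a slightly more detailed writeup of the same argument the paper gives.
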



\begin{proof}
    Denote $\G_1 / uv = \G'_1 = (G'_1, S, \mu'_1, l'_1)$, and note that $V(G'_1) = V(G_1) \setminus \{v\}$.  Also observe that contractions do not alter the set of descendants of a node, and thus $L_{G_1}(x) = L_{G'_1}(x)$ for all $x \in V(G'_1)$.
    Therefore, for $w \in V(G_1) \setminus \{v\}$, we have $m_{\G_1, \G_2}(w) = m_{\G'_1, \G_2}(w)$.  Since $w$ and its correspondent $m_{\G_1, \G_2}(w)$ both have the same species map and label before and after the contraction, the contribution of $w$ to $d_{path}$ and $d_{lbl}$ is the same in either $\G_1$ and $\G'_1$.  As this holds for every $w$ that is still in $G'_1$, we get $d_{asym}(\G_1, \G_2) \geq d_{asym}(\G'_1, \G_2)$.

    Now let $w \in V(G_2)$.  If $m_{\G_2, \G_1}(w) \neq v$, then $m_{\G_2, \G'_1}(w) = m_{\G_2, \G_1}(w)$, since $m_{\G_2, \G_1}(w)$ is still a common ancestor of $L(w)$ in $G'_1$, and no such lower ancestor can exist as it would also exist in $G_1$.
    The contribution of $w$ to $d_{path}$ and $d_{lbl}$ is therefore unchanged.
    Suppose instead that $m_{\G_2, \G_1}(w) = v$.  Then in $G'_1$, $u$ is a common ancestor of $L_{G_2}(w)$, and no such lower ancestor could exist, as it would also be in $G_1$.  In other words, $m_{\G_2, \G'_1}(w) = u$.  Since $uv$ is redundant, $\mu'_1(u) = \mu_1(u) = \mu_1(v)$ and $l'_1(u) = l_1(u) = l_1(v)$.  As the contribution of $w$ to $d_{path}$ and $d_{lbl}$ is based on $\mu_1(v)$ and $l_1(v)$, it is unchanged in $\G'_1$, and so $d_{asym}(\G_2, \G_1) = d_{asym}(\G_2, \G'_1)$, which concludes the proof.
\end{proof}

Since Lemma~\ref{lem:ldr-cost} can be applied to any sequence of contractions, in either $\G_1$ or $\G_2$ by symmetry, we get the following.

\begin{corollary}\label{cor:ldr-cost}
    Let $\G_1 = (G_1, S, \mu_1, l_1), \G_2 = (G_2, S, \mu_2, l_2)$ be reconciled gene trees with the same leafset.  Then $\dml(\G_1, \G_2) \geq \dml(LR(\G_1), LR(\G_2))$.
\end{corollary}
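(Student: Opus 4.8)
The plan is to obtain the corollary as a short induction on the number of redundant edges, using Lemma~\ref{lem:ldr-cost} as the single-contraction step and invoking the symmetry of $\dml$ to handle contractions in either argument. Before the induction, I would dispose of the degenerate case: if $\G_1, \G_2$ are not comparable, then $\dml(\G_1, \G_2) = \infty$. Contracting a redundant edge removes only an internal (duplication) node and leaves $L(G)$ and the species map on the leaves untouched, so $LR(\G_1)$ and $LR(\G_2)$ have the same leafsets and leaf-maps as $\G_1, \G_2$; hence they too are non-comparable and $\dml(LR(\G_1), LR(\G_2)) = \infty$, making the inequality trivial. So I may assume $\G_1, \G_2$ are comparable.

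Next I would contract the redundant edges of $\G_1$ one at a time. Write $\G_1 = \G_1^{(0)}, \G_1^{(1)}, \ldots, \G_1^{(p)} = LR(\G_1)$, where each $\G_1^{(i+1)} = \G_1^{(i)} / e_i$ for some redundant edge $e_i$ of $\G_1^{(i)}$, stopping once no redundant edge remains; that this sequence terminates at $LR(\G_1)$ irrespective of the chosen order follows from the uniqueness of $LR(\G_1)$ noted earlier. Each $\G_1^{(i)}$ shares the leafset and leaf-maps of $\G_1$ and is therefore comparable with $\G_2$, so Lemma~\ref{lem:ldr-cost} applies at every step and yields $\dml(\G_1^{(i)}, \G_2) \geq \dml(\G_1^{(i+1)}, \G_2)$. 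Chaining these inequalities gives $\dml(\G_1, \G_2) \geq \dml(LR(\G_1), \G_2)$.

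I would then repeat the argument in the second coordinate. Since $\dml(\G, \G') = d_{asym}(\G, \G') + d_{asym}(\G', \G) = \dml(\G', \G)$, the dissimilarity $\dml$ is symmetric, so Lemma~\ref{lem:ldr-cost} holds verbatim for a redundant edge of the \emph{second} argument. Holding $LR(\G_1)$ fixed and contracting the redundant edges of $\G_2$ one by one, with each intermediate tree still comparable with $LR(\G_1)$, produces $\dml(LR(\G_1), \G_2) \geq \dml(LR(\G_1), LR(\G_2))$. Combining this with the inequality from the previous paragraph gives $\dml(\G_1, \G_2) \geq \dml(LR(\G_1), LR(\G_2))$, as claimed.

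The only genuine care needed — and the closest thing to an obstacle — is the bookkeeping that justifies the induction: checking that every intermediate tree stays comparable, so the hypotheses of Lemma~\ref{lem:ldr-cost} are met at each step, and that iterated single-edge contraction terminates exactly at $LR(\G_1)$ (resp. $LR(\G_2)$). This rests on the order-independence and well-definedness of $LR$ asserted in the text, which one can make explicit by observing that contracting a redundant edge $uv$ (with $u = p_G(v)$) turns each edge from $v$ to a child $v'$ into an edge $uv'$ with identical redundancy status, since $\mu(u) = \mu(v)$ and $l(u) = l(v) = dup$; hence the redundant edges surviving a contraction are precisely the uncontracted ones, and the process exhausts exactly the original redundant set.
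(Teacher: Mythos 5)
Your proposal is correct and takes essentially the same route as the paper, which proves the corollary in one line by noting that Lemma~\ref{lem:ldr-cost} can be applied to any sequence of contractions, in either $\G_1$ or $\G_2$ by symmetry. Your extra bookkeeping --- checking that intermediate trees remain comparable, that iterated contraction terminates at $LR(\G_1)$ and $LR(\G_2)$, and that the non-comparable case gives $\infty$ on both sides --- simply makes explicit what the paper leaves implicit, and is sound.
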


The above is sufficient to deduce that if $\simeq_d$ is interpreted as ``being the same reconciled tree'', then we have a semi-metric, unless $\alpha = 0$ or $\alpha = 1$ (see Appendix for full proof).

\begin{theorem}
    \label{thm:semimetric}
    For any $\alpha \in (0, 1)$, $\dml$ is a semi-metric under $\simeq_d$. 
\end{theorem}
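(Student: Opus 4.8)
The plan is to verify the three semi-metric axioms under $\simeq_d$, with positivity being where essentially all the work lies. Symmetry is immediate from $\dml(\G_1,\G_2) = d_{asym}(\G_1,\G_2) + d_{asym}(\G_2,\G_1)$, which is manifestly invariant under swapping its arguments. The identity axiom $\dml(\G,\G)=0$ is also routine: for every $v \in V(G)$ we have $m_{\G,\G}(v) = \lca_G(L_G(v)) = v$, so each node is matched to itself and contributes $0$ to both $d_{path}$ and $d_{lbl}$. The real content is positivity, i.e.\ that $\G_1 \not\simeq_d \G_2$ forces $\dml(\G_1,\G_2) > 0$; I will prove the contrapositive, namely that $\dml(\G_1,\G_2) = 0$ implies $LR(\G_1) \simeq LR(\G_2)$. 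The first reduction uses Corollary~\ref{cor:ldr-cost}: since $\dml(\G_1,\G_2) \ge \dml(LR(\G_1),LR(\G_2)) \ge 0$, a value of $0$ for the former forces the latter to be $0$ as well. Because contraction preserves the leafset, the species tree, and the leaf-maps, $LR(\G_1)$ and $LR(\G_2)$ are again comparable; so it suffices to prove the implication only in the least duplication-resolved case, i.e.\ to show that if $\G_1,\G_2$ are \emph{additionally} least duplication-resolved and $\dml(\G_1,\G_2)=0$, then $\G_1 \simeq \G_2$.

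The next step is to extract pointwise information from $\dml(\G_1,\G_2)=0$, and here the hypothesis $\alpha \in (0,1)$ is essential. Since $\alpha$ and $1-\alpha$ are both strictly positive and all the $d_{path}, d_{lbl}$ terms are nonnegative, a total of $0$ forces each of $d_{path}(\G_1,\G_2)$, $d_{lbl}(\G_1,\G_2)$, $d_{path}(\G_2,\G_1)$, $d_{lbl}(\G_2,\G_1)$ to vanish individually. This is exactly the step that breaks for $\alpha \in \{0,1\}$, where only one component need vanish. Consequently both lca-maps $\phi := m_{\G_1,\G_2}$ and $\psi := m_{\G_2,\G_1}$ preserve the species map and the label: $\mu_1(v) = \mu_2(\phi(v))$ and $l_1(v) = l_2(\phi(v))$ for all $v \in V(G_1)$, and symmetrically for $\psi$.

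The heart of the argument is then to show that $\phi$ and $\psi$ are mutually inverse bijections. For any $v \in V(G_1)$, we have $L_{G_2}(\phi(v)) \supseteq L_{G_1}(v)$ and $\lca_{G_1}(L_{G_1}(v)) = v$ (the latter using that internal nodes have at least two children), so the composition satisfies $\psi(\phi(v)) \succeq_{G_1} v$. Chaining the two preservation properties gives $\mu_1(\psi(\phi(v))) = \mu_1(v)$ and $l_1(\psi(\phi(v))) = l_1(v)$; were $\psi(\phi(v)) \neq v$, then $v \prec_{G_1} \psi(\phi(v))$ would be a strict ancestor-descendant pair in the least duplication-resolved tree $\G_1$, and Lemma~\ref{lem:dup-lr-structure} would force a difference in species or label, a contradiction. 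Hence $\psi \circ \phi = \mathrm{id}$ on $V(G_1)$, and symmetrically $\phi \circ \psi = \mathrm{id}$ on $V(G_2)$, so $\phi$ is a bijection with inverse $\psi$. Since $\phi$ is order-preserving (being an lca-map) and its inverse $\psi$ is too, $\phi$ is an isomorphism of the ancestor posets, which for rooted trees is precisely a topological tree isomorphism; moreover $\phi$ fixes every leaf because $\phi(x) = \lca_{G_2}(\{x\}) = x$. Combined with the already-established preservation of $\mu$ and $l$, this yields $\G_1 \simeq \G_2$, completing positivity.

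I expect the bijection step to be the main obstacle: one must combine the vanishing of the penalty in \emph{both} directions with Lemma~\ref{lem:dup-lr-structure} to rule out the composition $\psi\circ\phi$ strictly ascending, and this is the only place where both least duplication-resolvedness and the strict range $\alpha \in (0,1)$ are genuinely used. For completeness one may additionally verify the converse implication $\G_1 \simeq_d \G_2 \Rightarrow \dml(\G_1,\G_2)=0$ by a direct analysis of duplication clusters---each node's correspondent $m(v)$ lands inside the cluster matching that of $v$, where $\mu$ and $l$ are constant---though this is not among the three axioms as stated.
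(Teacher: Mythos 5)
Your proof is correct, and it shares the paper's skeleton: identity and symmetry are handled identically, the hypothesis $\alpha\in(0,1)$ is used in the same way (to force every component of the dissimilarity to vanish / to guarantee a positive contribution), and the reduction to least duplication-resolved trees via Corollary~\ref{cor:ldr-cost} is the same. Where you genuinely diverge is the core of positivity. The paper works in the direct (non-contrapositive) direction and splits into two cases: (i) every clade of $G_1$ equals the clade of its correspondent in $G_2$ and vice versa, in which case it shows that if all species/label pairs agreed then $m_{12}$ would be an isomorphism, via separate injectivity, surjectivity, and edge-preservation arguments (the last by an explicit ``no sandwiched clade'' argument), contradicting $\G_1\not\simeq\G_2$; and (ii) some clade satisfies $L(v)\subsetneq L(v')$, in which case Lemma~\ref{lem:dup-lr-structure} applied to the strict ancestor $v''=m_{21}(v')\succ_{G_1} v$ exhibits a node of $G_2$ with positive contribution. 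Your contrapositive argument unifies these cases: zero total dissimilarity makes both lca-maps species- and label-preserving, and Lemma~\ref{lem:dup-lr-structure} is then invoked once, uniformly, to rule out $m_{21}(m_{12}(v))\succ_{G_1} v$, giving $m_{21}\circ m_{12}=\mathrm{id}$ and $m_{12}\circ m_{21}=\mathrm{id}$; bijectivity, leaf-fixing, and edge preservation then follow abstractly because mutually inverse order-preserving bijections between rooted trees constitute a tree isomorphism (covers map to covers). What your route buys is the elimination of the case analysis and of the clade-sandwiching argument; what the paper's route buys is an explicit witness node with a positive contribution, which is slightly more informative than bare positivity. Both arguments rest on the same two supporting results, and your uses of ``internal nodes have at least two children'' (to get $\lca_{G_1}(L_{G_1}(v))=v$) and of comparability of the contracted trees are exactly the points that need care, so no gap remains.
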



\begin{proof}
    \emph{Identity.}  Let $\G = (G, S, \mu, l)$ be a reconciled gene tree.   Let us argue that $d(\G, \G) = 0$. 
    Let $v \in V(G)$, and notice that $m_{\G, \G}(v) = v$.  Therefore, the distance in $S$ between $\mu(v)$ and $\mu(m(v))$ is $0$ and $v$ incurs no label penalty.  Since this holds for every $v$, the dissimilarity between $\G$ and $\G$ is $0$. 

    \emph{Symmetry.}  Observe that $\dml$ is symmetric by design, as it adds both terms $d_{asym}(\G_1, \G_2)$ and $d_{asym}(\G_2, \G_1)$ whether we calculate $\dml(\G_1, \G_2)$ or $\dml(\G_2, \G_1)$.

    \emph{Positivity.}
    The rest of the proof is dedicated to showing that the positivity requirement is met under $\simeq_d$.
    Let $\G_1 = (G_1, S, \mu_1, l_1)$ and $\G_2 = (G_2, S, \mu_2, l_2)$  such that $\G_1 \not\simeq_d \G_2$.  We need to show that $\dml(\G_1, \G_2) > 0$.
    We may assume that $\G_1, \G_2$ are least duplication-resolved.  This is because if not, then by Corollary~\ref{cor:ldr-cost}, $\dml(\G_1, \G_2) \geq \dml(LR(\G_1), LR(\G_2))$, so if we prove positivity for any pair of least duplication-resolved gene trees,  it will also hold for any pair of trees.
    Moreover, $\G_1 \not \simeq_d \G_2$ means that their least duplication-resolved forms are not isomorphic.
    
    So, from now on we assume that $\G_1$ and $\G_2$ are least duplication-resolved, and that $\G_1 \not \simeq \G_2$.
    To ease notation, we use $m_{12}$ instead of $m_{\G_1, \G_2}$ and $m_{21}$ instead of $m_{\G_2, \G_1}$.  To ease further, for $v \in V(G_1)$, we may denote $v' = m_{12}(v)$ for the correspondent of $v$ in $G_2$.

    Suppose first that for every $v \in V(G_1)$, $L(v) = L(m_{12}(v))$ \emph{and} that for every $w \in V(G_2)$, $L(w) = L(m_{21}(w))$.  This means that both trees have exactly the same set of clades.
    We claim that there must be some $v \in V(G_1)$ such that $\mu_1(v) \neq \mu_2(v')$ or $l_1(v) \neq l_2(v')$.  If this is true, then $\dml(\G_1, \G_2) > 0$ as desired.  For the sake of contradiction, assume otherwise that $\mu_1(v) = \mu_2(v')$ and $l_1(v) = l_2(v')$ for every $v \in V(G_1)$.  We claim that $m_{12}$ is an isomorphism and deduce that $\G_1 \simeq \G_2$, which will contradict our assumption.  
    
    First note that $m_{12}$ is a bijection.  Indeed, no two distinct nodes $v_1, v_2$ of $G_1$ have $L(v_1) = L(v_2)$, because nodes with a single child are forbidden.  Thus each $v \in V(G_1)$ maps to a unique and distinct node of $G_2$, namely the unique node $v'$ with $L(v) = L(v')$.  Hence $m_{12}$ is injective.  The map is also surjective: if there is some $w \in V(G_2)$ such that no $v \in V(G_1)$ maps to $w$, then the clade $L(w)$ is not present in $G_1$.  In that case, $L(m_{21}(w)) \neq L(w)$, contrary to our initial assumption.  It follows that $m_{12}$ is bijective.
    
    Next observe that $L(G_1) = L(G_2)$ and that $m_{12}$ maps leaves of $G_1$ to the same leaf in $G_2$, as required by the definition of $\simeq$.  Moreover by assumption, for every $v \in V(G_1)$, we have $\mu_1(v) = \mu_2(v') = \mu_2( m_{12}(v) )$ and $l_1(v) = l_2(v') = l_2( m_{12}(v) )$.  
    
    It only remains to argue that $m_{12}$ preserves edges and non-edges.
    Consider $uv \in E(G_1)$, with $u$ the parent of $v$.  
    Then $L(v) \subset L(u)$ and there is no node $w \in V(G_1)$ such that $L(v) \subset L(w) \subset L(u)$.
    In $G_2$, by clade equality we also have $L(v') \subset L(u')$, and so $v'$ must descend from $u'$.  If $u' v' \notin E(G_2)$, then the child $z$ of $u'$ on the path from $u'$ to $v'$ in $G_2$ satisfies $L(v') \subset L(z) \subset L(u')$.  By clade equality, this implies $L(v) \subset L(z) \subset L(u)$.  But then the clade of $m_{21}(z)$, the correspondent of $z$ in $G_1$, cannot be equal to the clade of $z$, since we argued that $G_1$ contains no clade sandwiched between $L(u)$ and $L(v)$.  Therefore, $u' v' \in E(G_2)$.  
    Using a symmetric argument, if $u'v' \in E(G_2)$, then $uv \in E(G_1)$, since again $u'$ and $v'$ must have corresponding clades in $G_1$ with none in-between.  
    Thus, $m_{12}$ satisfies all the conditions of an isomorphism, which is a contradiction as we assumed that $\G_1$ and $\G_2$ were not isomorphic.

    We may thus assume that there is some $v \in V(G_1)$ such that $\mu_1(v) \neq \mu_2(v')$ or $l_1(v) \neq l_2(v')$.  Either way, since $\alpha \in (0, 1)$ (i.e. $\alpha \neq 0, 1$), we get $\dml(\G_1, \G_2) > 0$.
    This takes care of the case where $G_1$ and $G_2$ have the same set of clades.
    
    So, we may assume that there is some $v \in V(G_1)$ such that $L(v) \neq L(m_{12}(v))$, or some $w \in V(G_2)$ such that $L(w) \neq L(m_{21}(w))$.    
    We may assume that the former occurs --- which is without loss of generality since we can swap the roles of $\G_1$ and $\G_2$ as $\dml$ is symmetric.

    Let $v \in V(G_1)$ such that $L(v) \neq L(v')$.  
    Because $v'$ is the lowest common ancestor of $L(v)$ in $G_2$, it must be that $L(v)$ is a strict subset of $L(v')$. 
    If $v$ and $v'$ have different species map or label, the dissimilarity will be non-zero and we are done, so assume that $\mu_1(v) = \mu_2(v')$ and $l_1(v) = l_2(v')$.  

    Let $v'' = m_{21}(v')$ be the node of $G_1$ that corresponds to $v'$.  
    We have $L(v) \subset L(v') \subseteq L(v'')$, and so $v''$ must be a strict ancestor of $v$.  Since $\G_1$ and $\G_2$ are least duplication-resolved, by Lemma~\ref{lem:dup-lr-structure}, $v''$ either has a different species or a different label than $v$, and thus different from $v'$ as well.  Since $v'$ has a difference with its correspondent $v''$, the dissimilarity is non-zero.
    Having handled every case, it follows that $\G_1$ and $\G_2$ have non-zero dissimilarity.
\end{proof}

We next show that, despite being a semi-metric, the $\dml$ dissimilarity measure is not a metric since it does not satisfy the triangle inequality on non-binary gene trees, regardless of $\alpha$.  If $\alpha$ is a constant, it can even be far from satisfying the inequality.

\begin{figure}[H]
    \centering
    \includegraphics[width=0.8\textwidth]{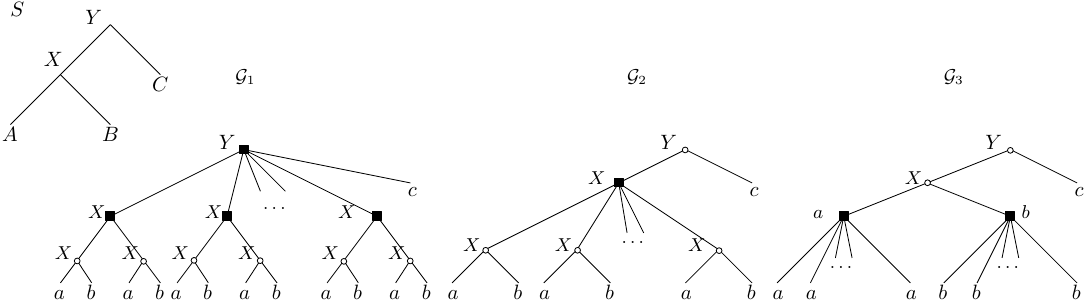}
    \caption{A species tree $S$ and reconciled gene trees $\G_1, \G_2, \G_3$ that violate the triangle inequality. }
    \label{fig:reconc-triangle}
\end{figure}

\begin{proposition}
    For any $\alpha \in [0, 1]$, possibly depending on the number of leaves of the gene trees, $\dml$~does not necessarily satisfy the triangle inequality under $\simeq_d$.  
    This is true even if the gene trees use the lca-mapping.

    Moreover, for any fixed $\alpha < 1$, the quantity $\dml(\G_1, \G_3)$ can be arbitrarily larger than $\dml(\G_1, \G_2) + \dml(\G_2, \G_3)$. 
\end{proposition}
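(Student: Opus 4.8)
The plan is to prove this negative result by an explicit counterexample together with a replication argument, exactly in the spirit of Figure~\ref{fig:reconc-triangle}. The guiding principle comes from the lca-mapping: for any node $v$, its correspondent $m_{\G_i,\G_j}(v)=\lca_{G_j}(L(v))$ has a clade containing $L(v)$, so under the lca-mapping $\mu_j(m_{\G_i,\G_j}(v))$ is always an $S$-ancestor of $\mu_i(v)$, and the term it contributes to $d_{path}$ is exactly $depth(\mu_i(v))-depth(\mu_j(m_{\G_i,\G_j}(v)))\ge 0$. Intuitively, topological disagreement can only \emph{lift} a clade's image toward the root of $S$, paying a path penalty and, when the lifted target is a duplication rather than a speciation, also a label penalty. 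The whole idea is to build an intermediate tree $\G_2$ whose clades are lifted only mildly when read from $\G_1$ and from $\G_3$ (and vice versa), while the clades of $\G_1$ are lifted drastically when read directly from $\G_3$.

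Concretely, I would first fix a small species tree $S$ and take $\G_1$ to be the reconciled gene tree whose topology agrees with $S$, so all its internal nodes are speciations at their lca-species. I would then let $\G_3$ reshuffle the leaves so as to break the clades of $\G_1$, forcing their correspondents in $G_3$ up to a high node that the reconciliation rules \emph{compel} to be a duplication: its two children are already forced onto the same species, so a speciation label would violate \emph{speciations separate species}. This makes the direct comparison $\dml(\G_1,\G_3)$ expensive in both components. The delicate part is $\G_2$: I would choose its topology so that in each of the four routed directions $\G_1\!\to\!\G_2$, $\G_2\!\to\!\G_1$, $\G_2\!\to\!\G_3$, $\G_3\!\to\!\G_2$ every correspondent lands on a node carrying the \emph{same} species and the \emph{same} label as its source, keeping $\dml(\G_1,\G_2)$ and $\dml(\G_2,\G_3)$ small. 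I would then tabulate the six $d_{asym}$ values and read off the strict inequality. Since each $\dml$ is affine in $\alpha$, so is the gap $f(\alpha):=\dml(\G_1,\G_3)-\dml(\G_1,\G_2)-\dml(\G_2,\G_3)$; arranging the label penalties to give $f(0)>0$ and the path penalties to give $f(1)>0$ forces $f(\alpha)>0$ on all of $[0,1]$ by affineness. This settles the first claim for every $\alpha$, including the endpoints and any $n$-dependent choice, since for the fixed example $\alpha$ is merely some value in $[0,1]$ already covered.

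For the quantitative strengthening with fixed $\alpha<1$, the plan is to replicate the gadget. I would build $S$ as $k$ disjoint copies of the base species tree hung under a common root, place an independent copy of the base gadget in each species subtree, and join the $k$ copies in each of $\G_1,\G_2,\G_3$ under a single duplication root mapped to $r(S)$. Because the leaves of distinct copies lie in disjoint species subtrees, the $S$-lca of any within-copy clade stays inside that copy's subtree in all three gene trees; hence no correspondent crosses copies, and both $d_{path}$ and $d_{lbl}$ decompose as a sum over the $k$ copies plus an $O(1)$ contribution from the shared root. Writing $c>0$ for the (integer) label part of the per-copy gap, the replicated gap is at least $(1-\alpha)\,k\,c-O(1)$, which tends to infinity for every fixed $\alpha<1$, matching exactly the claimed range.

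The main obstacle is the simultaneous cheapness of the intermediary. Making one routed direction free is easy---let $\G_2$ share a clade and copy its label---but correspondents are lca's in the \emph{target} tree, so a topology that frees, say, $\G_1\!\to\!\G_2$ tends to lift some clade when read back as $\G_2\!\to\!\G_1$, or to spoil one of the two directions involving $\G_3$; all four routed comparisons must stay cheap at once while the single direct pair $\G_1$--$\G_3$ stays expensive. Compounding this, the rigid constraints \emph{time-consistency} and \emph{speciations separate species} restrict which (species, label) pairs are admissible at each node, so one must verify that the intended labels are legal in all three trees and that no accidental match sneaks into $\G_1$--$\G_3$ nor any accidental mismatch into the routed pairs. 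Finally, one must confirm that replication introduces no cross-copy correspondences and that the shared root contributes only a bounded, $\alpha$-independent amount, so the $k\,c$ growth is not absorbed.
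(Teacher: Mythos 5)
Your proposal never actually exhibits the counterexample, and for a proposition whose entire content \emph{is} a counterexample, this is the load-bearing gap: you describe desiderata for $(\G_1,\G_2,\G_3)$ and then candidly list the ``main obstacle'' (simultaneous cheapness of the intermediary) without resolving it. Worse, the specification you give for $\G_2$ --- that in all four routed directions every correspondent carries the \emph{same} species and the \emph{same} label as its source --- is not merely hard to achieve but provably unachievable whenever the direct pair is to violate the inequality. Indeed, perfect matching in all four routed directions forces $\dml(\G_1,\G_2)=\dml(\G_2,\G_3)=0$ for every $\alpha$; fixing any $\alpha\in(0,1)$, positivity (Theorem~\ref{thm:semimetric}) yields $LR(\G_1)\simeq LR(\G_2)\simeq LR(\G_3)$, and two trees with isomorphic least duplication-resolved forms are at $\dml$-distance $0$ (each node's correspondent lands inside the matching redundant-edge class, which has identical species and label), so $\dml(\G_1,\G_3)=0$ as well and no violation occurs. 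Any successful construction must therefore let the routed pairs pay a nonzero amount, and this is exactly what the paper's Figure~\ref{fig:reconc-triangle} gadget arranges: a non-binary root with $k$ copies of an $((a,b),(a,b))$ subtree in $\G_1$, $2k$ copies of $(a,b)$ in $\G_2$, and a small $\G_3$, for which the six $d_{asym}$ values give $\dml(\G_1,\G_2)+\dml(\G_2,\G_3)=4-\alpha$ (bounded) while $\dml(\G_1,\G_3)=k(1-\alpha)+2\alpha+3$ (growing). That explicit gadget and the verification of its six values is the real work of the proof, and it is absent from your proposal.

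The scaffolding around the missing gadget is mostly sound, and worth comparing to the paper. Your affineness argument --- writing the gap as $f(\alpha)=\alpha f_{path}+(1-\alpha)f_{lbl}$ and noting that $f(0)>0$ and $f(1)>0$ force $f(\alpha)>0$ on all of $[0,1]$ --- is correct and is a cleaner way to cover every $\alpha$, including endpoint and $n$-dependent choices, than the paper's direct estimate $k(1-\alpha)+2\alpha+3\geq 5>4-\alpha$; the paper's gadget does satisfy $f(0)=k-1>0$ and $f(1)=2>0$, so your argument would apply to it verbatim. Your replication across disjoint species subtrees would also prove the second claim once a base gadget exists, with two caveats: (i) your lower bound $(1-\alpha)kc-O(1)$ silently assumes the per-copy path gap $f_{path}$ is nonnegative, which need not hold; the safe bound is $k\,f(\alpha)-O(1)$, positive by your own affineness step; and (ii) it is weaker than the paper's construction, which keeps the routed sum \emph{constant} while the direct distance grows (unbounded ratio, not just unbounded difference), precisely by exploiting non-binary nodes to replicate subtrees within the same species rather than across disjoint species copies.
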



\begin{proof}
    Consider the tree reconciled gene trees illustrated in Figure~\ref{fig:reconc-triangle}.
    Suppose that the root of the $\G_1$ gene tree has, as children, $k \geq 2$ copies of an $((a, b), (a, b))$ subtree.  Then the gene tree of $\G_2$ has $2k$ copies of an $(a, b)$ subtree.  Assume that every $(a, b)$ subtree of $\G_1$ maps uniquely to some $(a, b)$ subtree of $\G_2$.  
    One can check that the three reconciled gene trees use the lca-mapping.

    We now consider the asymmetric dissimilarity between each pair of reconciled gene trees.
    \begin{itemize}
        \item 
        $d_{asym}(\G_1, \G_2) = 1 - \alpha$.  The speciations in $X$ of $\G_1$ each map to a speciation in $X$ in $\G_2$, so they incur no cost.  Each duplication in $X$ of $\G_1$ must map to the duplication in $X$ of $\G_2$, again at no cost.  Only the duplication mapped to $Y$ maps to a speciation mapped to $Y$, incurring a cost of $1 - \alpha$.

        \item 
        $d_{asym}(\G_2, \G_1) = 1$.  The root of $\G_2$ is a speciation in $Y$ that maps to a duplication in $Y$ (cost $(1 - \alpha)$ for the label change) and the duplication in $X$ maps to the root of $\G_1$ (mapped to $Y$, cost of $\alpha$ for the path of distance $1$).  The $(a, b)$ subtrees incur no cost.

        \item 
        $d_{asym}(\G_2, \G_3) = 1 - \alpha$.  Each $(a, b)$ subtree in $\G_2$ maps to the speciation in $X$ of $\G_3$, so no cost incurred.  The only cost incurred is $1 - \alpha$ for the duplication in $X$ of $\G_2$ mapped to a speciation in $X$.

        \item 
        $d_{asym}(\G_3, \G_2) = 1 + \alpha$.  The duplications in $a$ and $b$ map to a duplication in $X$ (cost $2\alpha$) and the speciation in $X$ to a duplication in $X$ (cost $1 - \alpha$).
    \end{itemize}

    From the above, it follows that $\dml(\G_1, \G_2) + \dml(\G_2, \G_3) = 4 - \alpha$.
    
    Let us now compare $\G_1$ and $\G_3$:
    \begin{itemize}
        \item 
        $d_{asym}(\G_1, \G_3) = k(1 - \alpha) + 1 - \alpha$.  The $(a, b)$ subtrees of $\G_1$ incur no cost.   However, each duplication in $X$ maps to the speciation in $X$ of $\G_3$, and there are $k$ of them, for a cost of $k(1 - \alpha)$.  The $Y$ duplication also pays $(1 - \alpha)$.  
        \item 
        $d_{asym}(\G_3, \G_1) = 2 + 3\alpha$.  Every internal node of $\G_3$ maps to the root of $\G_1$.  The $a$ and $b$ duplications of $\G_3$ each map to a duplication in $Y$ of $\G_1$ for a cost of $2 \cdot 2\alpha$.  The speciation in $X$ costs $\alpha + 1 - \alpha = 1$ and the speciation in $Y$ costs $1 - \alpha$.  
    \end{itemize}
    By summing the two directions and simplifying, we get $\dml(\G_1, \G_3) = k(1 - \alpha) + 2\alpha + 3$.
    Observe that if $\alpha < 1$ is a fixed constant, the quantity $k(1 - \alpha) + 2\alpha + 3$ can be made arbitrarily large as $k$ grows, whereas $\dml(\G_1, \G_2) + \dml(\G_2, \G_3) = 4 - \alpha$ is a constant.  This justifies the second part of the proposition.

    If $\alpha$ is not necessarily fixed, then assuming $k \geq 2$ and $\alpha \leq 1$, 
    \begin{align*}
        k(1 - \alpha) + 2\alpha + 3 \geq 2 \cdot (1 - \alpha) + 2 \alpha + 3 = 5 > 4 - \alpha
    \end{align*}
    which establishes the lack of a triangle inequality.
\end{proof}

We observe that in the example from Figure~\ref{fig:reconc-triangle}, the triangle inequality is violated mainly because the trees are heavily imbalanced in terms of number of internal nodes.  We could not find counter-examples in which all trees are \emph{binary}.

\section{Diameters}

We now study the question of computing the \emph{diameter} of $\dml$, which is the maximum possible dissimilarity achievable over a given instance size.  This can be useful in practice for normalization, since we can compare heterogeneous datasets by dividing obtained dissimilarities by the diameter.  In the context of general trees, the diameter is usually the maximum dissimilarity among all pairs of trees with the same number of leaves $n$.  In reconciled gene trees though, there are multiple ways to define the diameter.  We may fix two numbers $n, m$, and find the maximum $\dml$ value among all species trees on $n$ leaves and pairs of gene trees on $m$ leaves.  Or, we could decide to fix the species tree $S$, and find the gene trees over $m$ leaves of maximum $\dml$ value with respect to $S$.  Or, we could fix the species tree $S$, and for each species leaf $s \in L(S)$ also fix the number $m_s$ of extant genes that belong to $s$, and find the most distant gene trees under these parameters.  

Even the simplest forms of diameters are not trivial to determine.  We thus provide initial results by determining the diameter in the case that the species tree $S$ is fixed, and gene trees contain exactly one gene per species. Even though this assumption may not hold in practice,
we hope that the bounds established here can be extended to broader classes of scenarios in the future.
 We leave the question of finding the theoretical values of the other diameters as open problems.

For a fixed species tree $S$, let $\mathsf{G}^S$ represent the set of all reconciled gene trees $\G = (G, S, \mu, l)$, such that for each $s \in L(S)$, exactly one leaf $x$ of $G$ satisfies $\mu(x) = s$.  Since each leaf of $G$ is uniquely identifiable by its species, we assume that all the elements of $\mathsf{G}^S$ have the same leaves and are pairwise-comparable.
We define the \emph{diameter for fixed $S$} as:
    \begin{equation*}
        diam(\dml,S) = \max\limits_{\G_1,\G_2 \in \mathsf{G}^S} \Big\{ \dml(\G_1,\G_2) \Big\}
        \end{equation*}

 In terms of $d_{lbl}$, in the worst case $d_{lbl}(\G_1, \G_2)$ is the number of internal nodes of the gene tree of $\G_1$, which occurs when all labels differ.
 We next characterize the maximum possible path distance.  
 It is tempting to make every node of $\G_1$ map to a deepest leaf of $S$, and every node of $\G_2$ to the root of $S$, thereby maximizing $dist_S(\mu_1(v), \mu_2(m(v)))$ for every node $v$, but such an example may not satisfy the rules of reconciliation. 

 For a species tree $S$, 
 let $H(S) = \sum\limits_{v \in V(S)\setminus L(S)} dist_S(v,r(S))$ be the sum of root-to-internal node distances.

 \begin{lemma}\label{lem:hs}
    Let $S$ be a species tree on $n \geq 1$ leaves. 
    Let $\G_1$ and $\G_2$ be two reconciled trees in $\mathsf{G}^S$. Then $d_{path}(\G_1,\G_2) \leq H(S) \leq (n - 1)(n - 2)/2$.
    \label{lem:general_path}
\end{lemma}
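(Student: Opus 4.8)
The plan is to establish the two inequalities separately: first the reconciliation-dependent bound $d_{path}(\G_1,\G_2) \le H(S)$, and then the purely combinatorial estimate $H(S) \le (n-1)(n-2)/2$. For the first, I would begin by replacing each per-node contribution with a quantity expressible in $S$ alone. For an internal node $v \in V(G_1)$, define $\ell_v = \lca_S(\{\mu_1(x) : x \in L(v)\})$, the lowest species spanned by the clade of $v$; since internal nodes of $G_1$ have at least two children and each element of $\mathsf{G}^S$ has exactly one gene per species, $L(v)$ contains at least two distinct species and so $\ell_v$ is an \emph{internal} node of $S$. Using the time-consistency requirement I would show that both $\mu_1(v)$ and $\mu_2(m(v))$ are (weak) ancestors of $\ell_v$: the former because $\mu_1(v) \succeq_S \mu_1(x)$ for every leaf $x \preceq_{G_1} v$, the latter because $m(v) = \lca_{G_2}(L(v))$ dominates every such $x$ in $G_2$ and $\mu_2(x) = \mu_1(x)$ by comparability. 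Two ancestors of $\ell_v$ lie on its root-path and are therefore comparable, so $dist_S(\mu_1(v),\mu_2(m(v))) = |depth(\mu_1(v)) - depth(\mu_2(m(v)))| \le depth(\ell_v)$. Leaves contribute $0$, since their correspondent is themselves and leaf species agree, so it remains to prove $\sum_{v} depth(\ell_v) \le H(S)$, the sum over internal $v \in V(G_1)$.

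The hard part will be that $v \mapsto \ell_v$ need not be injective, so one cannot simply match internal nodes of $G_1$ to distinct internal nodes of $S$. I would resolve this by double counting. Writing $depth(\ell_v)$ as the number of internal $u \in V(S)$ with $\ell_v \prec_S u$, and symmetrically $H(S) = \sum_{u} |\{w \in V(S) : w \text{ internal}, w \prec_S u\}|$, it suffices to show, for each fixed internal $u$ of $S$, that the number of internal $v \in V(G_1)$ with $\ell_v \prec_S u$ is at most the number of internal $w \in V(S)$ with $w \prec_S u$. Letting $k_u$ be the number of leaves of $S$ below $u$, the right-hand side equals $k_u - 2$. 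For the left-hand side, $\ell_v \prec_S u$ forces the species of $L(v)$ to all lie inside $S(c)$ for a single child $c$ of $u$; for each child $c$, the internal nodes of $G_1$ whose clade uses only the $k_c$ species below $c$ induce a subforest on those $k_c$ genes, which has at most $k_c - 1$ internal nodes. Summing over the two children of the binary node $u$ gives at most $(k_{c_1}-1)+(k_{c_2}-1) = k_u - 2$, matching the right-hand side; summing the per-$u$ inequalities over $S$ then yields $\sum_v depth(\ell_v) \le H(S)$.

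Finally, for $H(S) \le (n-1)(n-2)/2$, I would rewrite $H(S)$ as a count of comparable pairs of internal nodes: since every proper ancestor of a node is internal, $depth(u)$ counts the internal proper ancestors of $u$, and summing over all internal $u$ counts each ancestor–descendant pair of internal nodes exactly once. A binary species tree on $n$ leaves has exactly $n-1$ internal nodes, so the number of such comparable pairs is at most $\binom{n-1}{2} = (n-1)(n-2)/2$, which closes the argument. I expect the double-counting step of the second paragraph to be the delicate part, both in verifying the identity $H(S) = \sum_u (k_u - 2)$ on the $S$ side and the subforest bound on the $G_1$ side; the ancestor-path comparability reduction and the final binomial estimate are routine once set up.
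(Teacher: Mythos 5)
Your proposal is correct, and while it shares the paper's first reduction, its core combinatorial argument is genuinely different. Both proofs begin identically: each term $dist_S(\mu_1(v),\mu_2(m(v)))$ is bounded by $dist_S(\ell_v, r(S))$, where $\ell_v$ is the lca in $S$ of the species below $v$ (the paper calls this $\lambda(v)$), because time-consistency and comparability force both $\mu_1(v)$ and $\mu_2(m(v))$ to be ancestors of $\ell_v$. Where you diverge is in proving $\sum_{v} dist_S(\ell_v, r(S)) \le H(S)$: the paper argues by induction on $n$, deleting a cherry leaf $S_1$ of $S$ together with its gene, which requires some delicate surgery on the possibly non-binary gene tree (whether the parent $t$ of the deleted gene survives, the claim $\lambda'(v) \neq S_2$, and the identity $H(S) = H(S') + dist_S(X, r(S))$). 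You instead double count: interpreting both sides as counts of ancestor pairs, it suffices to show for each internal $u \in V(S)$ that at most $k_u - 2$ gene-tree nodes $v$ satisfy $\ell_v \prec_S u$, which follows because such $v$ form, for each child $c$ of $u$, a subforest whose internal nodes all have at least two children and whose leaves lie among the $k_c$ genes below $c$, giving at most $k_c - 1$ nodes per child. This per-node localization is sound (I checked the forest bound, including the disjointness of the maximal subtrees and the degenerate cases $k_c = 1$), and it makes transparent exactly where the one-gene-per-species and no-unary-nodes hypotheses enter. Your final step, $H(S) \le \binom{n-1}{2}$ by counting comparable pairs among the $n-1$ internal nodes of $S$, is likewise cleaner than the paper's second induction and immediately exhibits the caterpillar as the extremal case, which dovetails with the theorem that follows. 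The trade-off is that the paper's inductive style stays closer to the reconciliation formalism and reuses its leaf-removal technique elsewhere, whereas your argument is non-inductive and arguably yields more structural insight into when the bound is tight.
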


\begin{proof}
    Let us focus on $d_{path}(\G_1, \G_2) \leq H(S)$.
    Let $\G_1 = (G_1, S, \mu_1, l_1)$ and $\G_2 = (G_2, S, \mu_2, l_2)$.
    For the duration of the proof, let $\lambda : V(G_1) \rightarrow V(S)$ be the lca-mapping between $G_1$ and $S$.
Recall that $\lambda(v) = lca_S( \{ \mu_1(l) : l \in L(v) \} )$ is the lowest common ancestor of all the species that appear below $v$.  Also recall that $\mu_1(v) \succeq_S \lambda(v)$ for every $v \in V(G_1)$, since by time-consistency, $\mu_1(v)$ must be an ancestor of $\mu_1(l)$ for every $l \in L(v)$.

Let $v \in V(G_1) \setminus L(G_1)$ and denote $v' = m_{\G_1, \G_2}(v)$.  
We claim that 
\[dist_S( \mu_1(v), \mu_2(v') ) \leq dist_S( \lambda(v), r(S) ).
\]  
As mentioned, we have that $\mu_1(v) \succeq_S \lambda(v)$.  Moreover, by the definition of $m_{\G_1, \G_2}$, $L(v')$ contains all the leaves in $L(v)$, so we also deduce that $\mu_2(v')$ is an ancestor of $\mu_2(l) = \mu_1(l)$ for every leaf $l \in L_{G_1}(v)$ (since we assume that $\G_1$ and $\G_2$ are comparable).   
 Therefore, $\mu_2(v') \succeq_S \lambda(v)$ as well.
Then, $dist_S( \mu_1(v), \mu_2(v') )$ is a distance between two ancestors of $\lambda(v)$ in $S$, which is maximized when one node maps to $\lambda(v)$ and the other maps to $r(S)$ (noting that $v$ and $v'$ can take either of these two roles).  

We deduce that 
\[
d_{path}(\G_1, \G_2) = \sum_{v \in V(G_1) \setminus L(G_1)} dist_S(\mu_1(v), \mu_2(m(v))) \leq \sum_{v \in V(G_1) \setminus L(G_1)} dist_S( \lambda(v), r(S) )
\]
where we note that we do not need to sum over the leaves of $G_1$, since corresponding leaves are mapped to the same species and do not contribute to the path distance. 

    We will prove by induction on the number of leaves $n$ of $S$ that, for any gene tree $\G_1$ reconciled with $S$ that has one gene per species, the value of $\sum_{v \in V(G_1) \setminus L(G_1)} dist_S( \lambda(v), r(S) )$ is always at most $H(S)$. 
    For the base case, suppose that $S$ has $n=1$ leaf. In this case, note that the species tree $S$ consists of a single node $v$ and $H(S) = 0$.  Because $\G_1$ has a single leaf per species, $G_1$ also has a single leaf, and the summation evaluates to $0$.  As another base case, suppose that $n = 2$.  Then $S$ has a root $X$ and two leaves $S_1, S_2$, and $H(S) = 0$ again.  Then $\G_1$ must also consist of a root $r$ and two leaves mapped to $S_1, S_2$.  Moreover, $r$ must map to $r(S)$ by time-consistency, and $dist_S( \mu_1(r), r(S)) = 0$.
    
    For the induction step, suppose that $S$ has $n\geq 3$ leaves.  Let $\G_1 = (G_1, S, \mu_1, l_1)$ be a reconciled trees with $S$ that has one gene per species.
    
    Let $S_1, S_2$ be two leaves of $S$ with common parent $X$ (i.e., $(S_1, S_2)$ form a so-called \emph{cherry} of $S$).  Note that because $S$ is binary and because there exist other leaves in $S$, $X$ cannot be the root and it therefore has a parent $p(X)$ in $S$.   
    We denote by $S' := S - S_1$ the tree that results from removing $S_1$ and suppressing the resulting node of degree $2$.  It other words, to obtain $S - S_1$, remove node $S_1$ and its parent $X$ as well as their incident edges, add the edge $(p(X),S_2)$. 

    Let $s_1$ be the unique leaf in $G_1$ such that $\mu_1(s_1) =  S_1$, and denote by $t := p(s_1)$ its parent in $G_1$.  
    Let $G_1' := G_1 - s_1$ be the tree obtained by removing $s_1$ and its incident edge.  
    If $t$ becomes a non-root node of degree $2$, then add an edge between $p(t)$ and all children of $t$, then delete $t$ and its incident edges.  If instead $t$ becomes a root with a single child, then delete $t$ and its incident edge.
    Note that because gene trees may not be binary, it is possible that $t$ is still present in $G_1'$.

    Let $\lambda'$ be the lca-mapping between $G_1'$ and $S'$ obtained from $\mu_1$, which is well-defined since, for each $l \in L(G_1')$, $\mu_1(l) \in L(S')$.
    Since $S'$ has one less leaf that $S$, we may use induction and deduce that
    \[
    \sum_{v \in V(G'_1) \setminus L(G'_1)} dist_{S'}(\lambda'(v), r(S')) \leq H(S')
    \]
    We note that by the choice of $X$, we have $H(S) = H(S') + dist_S(X, r(S))$.
    
    To relate this quantity to $G_1$, consider a node $v \in V(G_1) \setminus L(G_1)$.
    Suppose that $v \neq t$, in which case $v$ is also in $G_1'$.  Note that $\lambda'(v)$ is a node of $S'$, but also of $S$.  
    By observing that $L_{G_1'}(v) \subseteq L_{G_1}(v)$, we infer that $\lambda(v) \succeq_S \lambda'(v)$, since $\lambda(v)$ must be an ancestor of all the leaves in $L_{G_1'}(v)$, plus possibly $s_1$, which can only ``raise'' the lca.  
    We claim that $\lambda'(v) \neq S_2$.  To see this, note that in $G'_1$, $v$ has at least two descending leaves $v_1$ and $v_2$.  Because $G_1$ and $G'_1$ have one gene per species, $v_1$ and $v_2$ belong to two distinct species of $S'$.  One of those is possibly $S_2$, but the other is not, which implies that $\lambda'(v) \succeq_{S'} lca_{S'}( \mu_1(v_1), \mu_2(v_2))$ cannot be $S_2$.  
    By this claim, in $S$ the internal node $X$ is not on the path between $\lambda'(v)$ and the root, and so the distance between $\lambda'(v)$ and the root is the same in either $S$ or $S'$.  The fact that $\lambda(v) \succeq_S \lambda'(v)$ then implies that $\lambda(v)$ can only be closer to the root of $S$.  
    For $v \neq t$, we thus have 
    \[
    dist_S(\lambda(v), r(S)) \leq dist_S(\lambda'(v), r(S')).
    \]
    Next, consider the node $t \in V(G_1)$, which may or may not be in $G_1'$.  
    Since $t$ is an ancestor of $s_1$, and of some other leaf $l$ belonging to some species other than $S_1$, we get that $\mu_1(t) \neq S_1$ and thus that $\mu_1(t)$ is a strict ancestor of $S_1$.  Under this condition, $dist_S(\lambda(t), r(S))$ is maximized when $\lambda(t) = X$.  

    Combining the facts gathered so far, we deduce that 
    \begin{align*}
        \sum_{v \in V(G_1) \setminus L(G_1)} dist_S(\lambda(v), r(S)) &= \sum_{v \in V(G_1) \setminus (L(G_1) \cup \{t\})} dist_S(\lambda(v), r(S)) + dist_S(\lambda(t), r(S)) \\
        &\leq \sum_{v \in V(G_1) \setminus (L(G_1) \cup \{t\})} dist_{S'}(\lambda'(v), r(S')) + dist_S(X, r(S)) \\
        &\leq H(S') + dist_S(X, r(S)) \\
        &\leq H(S)
    \end{align*}
    as desired.  

    Finally, we show that $H(S) \leq (n-1)(n - 2)/2$ by induction on $n$.
If $n = 1$ or $n = 2$, the longest path from $r(S)$ to an internal node has length 0, which verifies the base case.
So suppose $n \geq 3$.  Let $S_1$ be a deepest leaf of $S$, which must be part of a cherry formed by the leaf pair $S_1, S_2$ whose common parent is $X$.  By induction,  $H(S - S_1) \leq (n - 2)(n - 3)/2$.  If we add back $X$ to $S - S_1$, the lengths of the previous root-to-internal node paths is unchanged, and we only add a path of length at most $n - 2$ from $r(S)$ to $X$ (in the worst case, that path goes through all the $n - 1$ internal nodes of $S$).  We get $H(S) \leq (n - 2)(n - 3)/2 + n - 2 = (n - 1)(n - 2)/2$.  
\end{proof}

We can now proceed to prove the following theorem.

\begin{theorem}
    Let $S$ be a species tree on $n \geq 2$ leaves. Then
        \begin{equation*}
        diam(\dml,S) =  2\alpha \cdot H(S) + (1 - \alpha)(2n - 2).
    \end{equation*}
    Moreover, among all species trees with $n$ leaves, the diameter is maximized when $S$ is a caterpillar, in which case $diam(\dml, S) = \alpha (n - 1)(n - 2) + (1 - \alpha)(2n - 2)$.
\end{theorem}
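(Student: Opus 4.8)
The plan is to prove the equality by sandwiching: first an upper bound valid for every pair in $\mathsf{G}^S$, then an explicit pair attaining it. For the upper bound I would add the two asymmetric terms and bound each component separately. By Lemma~\ref{lem:hs} applied in both directions, $d_{path}(\G_1,\G_2) \leq H(S)$ and $d_{path}(\G_2,\G_1) \leq H(S)$. For the labelling component, a leaf always corresponds to an extant node and so incurs no label penalty; hence $d_{lbl}(\G_1,\G_2)$ is at most the number of internal nodes of $G_1$, and since every internal node has at least two children while $G_1$ has $n$ leaves, there are at most $n-1$ of them (the same holding for $G_2$). Combining, $\dml(\G_1,\G_2) = \alpha(d_{path}(\G_1,\G_2)+d_{path}(\G_2,\G_1)) + (1-\alpha)(d_{lbl}(\G_1,\G_2)+d_{lbl}(\G_2,\G_1)) \leq 2\alpha H(S) + (1-\alpha)(2n-2)$.

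The core of the argument is exhibiting a pair meeting this bound. My construction takes both $G_1$ and $G_2$ to be topological copies of $S$ (one gene per species). In $\G_2$ I use the canonical reconciliation: every internal node is a speciation carrying the lca-mapping, so the node whose clade matches an internal species $s$ maps exactly to $s$. In $\G_1$ I keep the same topology but relabel every internal node as a duplication and remap it to $r(S)$; this is time-consistent (mapping to the root never violates the ancestor condition) and the \emph{speciations separate species} rule is vacuous for duplications, so $\G_1 \in \mathsf{G}^S$. Because the two trees share all clades, each correspondence $m_{\G_1,\G_2}$ and $m_{\G_2,\G_1}$ is the natural bijection between matching nodes. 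For an internal $v \in V(G_1)$ whose counterpart maps to species $s_v$, the term $dist_S(\mu_1(v),\mu_2(m(v))) = dist_S(r(S), s_v)$, and as $v$ ranges over the internal nodes these $s_v$ range over all internal nodes of $S$, giving $d_{path}(\G_1,\G_2)=H(S)$; the symmetric computation gives $d_{path}(\G_2,\G_1)=H(S)$. Every internal node of $G_1$ is a duplication matched to a speciation of $G_2$ and vice versa, so both label terms equal $n-1$. Hence $\dml(\G_1,\G_2)=2\alpha H(S)+(1-\alpha)(2n-2)$, matching the bound and proving the formula.

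For the \emph{moreover} claim I would observe that the additive term $(1-\alpha)(2n-2)$ depends only on $n$, so maximizing $diam(\dml,S)$ over species trees with $n$ leaves amounts to maximizing $H(S)$. Lemma~\ref{lem:hs} already gives $H(S) \leq (n-1)(n-2)/2$, and a caterpillar attains it since its internal nodes sit at depths $0,1,\dots,n-2$, whence $H(S)=0+1+\cdots+(n-2)=(n-1)(n-2)/2$. Substituting yields $diam(\dml,S)=\alpha(n-1)(n-2)+(1-\alpha)(2n-2)$.

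The main obstacle I anticipate is the factor of $2$ in $2\alpha H(S)$: the naive instinct is to pit a copy of $S$ against a single-duplication star mapped to the root, but that makes one of the two path directions collapse to $0$ (the star's lone internal node corresponds to the root of the other tree, at distance $0$), yielding only $\alpha H(S)$. The decisive idea is to retain full structure in \emph{both} trees and let the root-duplication tree and the speciation tree play symmetric roles, so that each internal node of one is paired with a counterpart in the other whose species lies at the far end of its root-path. Checking that the all-duplications-to-root tree is a genuine element of $\mathsf{G}^S$ is then the only delicate bookkeeping.
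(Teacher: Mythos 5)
Your proof is correct and takes essentially the same route as the paper: the identical upper bound (Lemma~\ref{lem:hs} applied in both directions plus the $n-1$ internal-node count for the label term), the identical extremal pair (two topological copies of $S$, one carrying the lca-mapping with all speciations, the other with every internal node a duplication mapped to $r(S)$), and the identical caterpillar computation $H(S)=0+1+\cdots+(n-2)=(n-1)(n-2)/2$ for the second claim. Your closing remark explaining why a star-tree opponent would lose one factor of $H(S)$ is a nice addition, but the core argument coincides with the paper's.
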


\begin{proof}
   We first show that our expression is an upper bound for the diameter. 
 Let $\G_1 = (G_1, S, \mu_1, l_1)$ and $\G_2 = (G_2, S, \mu_2, l_2)$ be in $\mathsf{G}^S$. 
 For the label component of $\dml$, since $G_1$ and $G_2$ have the same number of leaves, the maximum number of different nodes is bounded by the maximum number of internal nodes per tree. Given that every species has exactly one gene, this is exactly $n-1$.  Hence, for the $d_{lbl}$ component, the cost is at most $2(n - 1) = 2n - 2$  considering both directions.
 As for the $d_{path}$ component, we know by Lemma~\ref{lem:hs} that the cost is at most $H(S)$ in each of the two directions.  This justifies the upper bound. 

 For an example that achieves this bound, suppose that $G_1$ and $G_2$ have the same topology as $S$ (that is, they are both a copy of $S$, but we replace each leaf by a gene from that species).  For $\mu_1$, we use the lca-mapping between $G_1$ and $S$, and put $l_1(v) = spec$ for every $v \in V(G_1) \setminus L(G_1)$ (which is possible since $G_1$ is a copy of $S$ and uses the lca-mapping).  
 For $G_2$, for every $v \in V(G_2) \setminus L(G_2)$, we put $\mu_2(v) = r(S)$ and $l_2(v) = dup$.  

 Note that because all internal node labels differ, this example achieves the maximum $d_{lbl}$ value possible.
 For the path component, let $s \in V(S) \setminus L(S)$.  Let $v \in V(G_1)$ be the corresponding node in $G_1$ and $v'$ the corresponding node in $G_2$ (i.e., the copy of $s$ in the trees).  
 Note that $m(v) = v'$ and $m(v') = v$, and that $\mu_1(v) = s, \mu_2(v) = r(S)$.  Hence, the contribution of $v$ and $v'$ to the path component is 
 $dist_S(s, r(S))$ on one side, plus $dist_S(r(S), s)$ on the other side.  
 Since this holds for every internal $s$ of $V(S)$, the cost of the $d_{path}$ component is $2H(S)$.

 As for the second part of the statement, fist notice that by Lemma~\ref{lem:hs}, $H(S)$ is never more than $(n-1)(n-2)/2$, and so $diam(\dml, S) \leq \alpha (n-1)(n-2) + (1-\alpha)(2n - 2)$. 
 Suppose that $S$ is a caterpillar.
 Notice that the deepest internal node $X$ satisfies $d(X, r(S)) = n - 2$ (there are $n - 1$ internal nodes in $S$, and the path goes through all of them).  Then $p(X)$ has a path of length $n - 3$ to $r(S)$, then $p(p(X))$ of length $n - 4$, and so on, so that $H(S) = \sum_{i=1}^{n-2} i = (n - 1)(n - 2)/2$, achieving the maximum possible $H(S)$.
\end{proof}

\paragraph*{On the labeled RF distances}

We now take a brief detour into another distance designed to compare reconciliations, namely the labeled Robinson-Foulds distances as presented in \cite{LRF,briand2022linear}, of which there are two variants.  These distances are used in the next section and we briefly discuss upper bounds on their diameters.
An edge of a tree is \emph{internal} if none of its endpoints is a leaf.
 \emph{labeled tree} is a pair $\T = (T, l)$ where $T$ is an unrooted tree without degree two nodes, and $l : V(T) \setminus L(T) \rightarrow X$ assigns some label from some set $X$ to each internal node (one can think of the label set as $X = \{spec, dup\}$).  A \emph{label-flip} is an operation that changes the label of an internal node.  An \emph{extension} is the reverse of a contraction: it takes a node $v$ and a non-empty subset $X$ of its neighbors, creates a new node $w$, deletes the edges $\{vx : x \in X\}$, then adds the edges $\{wx : x \in X\}$ along with $vw$, such that the latter must be internal.
A \emph{labeled contraction} is an operation that contracts an internal edge $uv$ satisfying $l(u) = l(v)$, and a \emph{labeled extension} is an extension of $v$ that creates node $w$ with $l(w) = l(v)$.  

Given two labeled trees $\T_1 = (T_1, l_1), \T_2 = (T_2, l_2)$, the \emph{ELRF distance}~\cite{LRF} between $\T_1$ and $\T_2$ is the minimum number of labeled contractions, labeled extensions, and label-flips required to transform $\T_1$ into $\T_2$.

The \emph{LRF distance}~\cite{briand2022linear} is the minimum number of contractions, extensions, and label-flips required to transform $\T_1$ into $\T_2$ (note that the authors use the notion of node deletions and insertions, but are stated in~\cite{briand2022linear} to be the same as contractions and extensions).

For an integer $n \geq 3$, the diameter of the ELFR (resp. LRF) distance is the largest possible distance among all possible labeled trees with $n$ leaves.  These diameters were not discussed in the literature.  We provide bounds which we believe to be tight, under the assumption that the label set consists of two elements $X = \{spec, dup\}$.

\begin{figure}[H]
    \centering
    \includegraphics[width=0.7\textwidth]{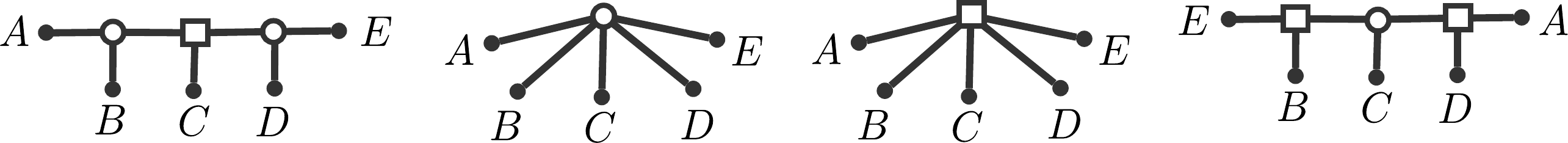}
    \caption{ An example of two labeled trees (left and right), with $n = 5$ leaves and two internal edges, which both need to be contracted.  To achieve this under the ELRF distance, we can perform $\lfloor (n - 2)/2 \rfloor = 1$ relabeling to make every label a circle (not shown), then contract every internal edge to obtain a star tree (second drawing).  We can then change the remaining label, and reverse the operations to obtain the right tree.  This takes $7 = 3n - 8$ operations.}
    \label{fig:diam}
\end{figure}

\begin{proposition}
    For $n \geq 3$ and label set $X$ of size $2$, the ELRF diameter is at most $3n - 8$.  Furthermore, the LRF diameter is at most $2n - 5$.  
\end{proposition}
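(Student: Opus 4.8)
The plan is to exhibit, for any two labeled trees $\T_1 = (T_1, l_1)$ and $\T_2 = (T_2, l_2)$ on the same $n$ leaves, a single universal sequence of operations transforming $\T_1$ into $\T_2$ that routes through the \emph{star tree} (the unique tree with one internal node adjacent to all $n$ leaves), and to bound the length of this sequence; since the distance is a minimum, any such sequence yields an upper bound. Write $k_i$ for the number of internal nodes of $T_i$. Because $T_i$ has no degree-two nodes, its internal nodes form a subtree joined by exactly $k_i - 1$ internal edges, and $k_i \le n-2$, with equality precisely for binary trees. Thus collapsing all internal edges of $T_i$ produces the star in $k_i - 1$ contractions, and conversely the star expands into any topology on the $n$ leaves via the reverse extensions; throughout, every merged node has degree at least $4$, so no degree-two node is ever created.

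For the ELRF bound, the first thing I would note is that labeled contractions require matching endpoint labels, so I make all internal labels of $T_1$ equal to a chosen label $c$, at a cost equal to the number of internal nodes of $T_1$ whose label is not $c$; afterwards every internal edge is contractible, reducing $T_1$ to the star in $k_1 - 1$ steps. Keeping the same label $c$ on the central node, I then build the topology of $T_2$ with all internal nodes labeled $c$ using $k_2 - 1$ labeled extensions (each new node inherits $c$), and finish by flipping the internal labels of $T_2$ that should differ from $c$. Letting $s_i, d_i$ count the internal nodes of $T_i$ labeled speciation and duplication, the total relabeling cost, minimized over the two choices of $c$, is $\min(s_1 + s_2,\, d_1 + d_2)$, and the key inequality $\min(s_1+s_2, d_1+d_2) \le \lfloor (k_1+k_2)/2\rfloor$ holds because the two quantities sum to $k_1+k_2$. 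Adding the structural cost gives a total of at most $(k_1-1)+(k_2-1)+\lfloor (k_1+k_2)/2\rfloor$, which is nondecreasing in $k_1+k_2$ and hence maximized at $k_1=k_2=n-2$, evaluating to exactly $3n-8$.

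For the LRF bound the same skeleton is cheaper, since contractions are unconstrained and an inserted node may be assigned any label directly. I would contract $T_1$ to the star in $k_1-1$ unrestricted contractions, insert the $k_2-1$ internal nodes of $T_2$ already carrying their correct $l_2$-labels, and finally flip the label of the single surviving central node if needed. This costs at most $(k_1-1)+(k_2-1)+1 = k_1+k_2-1 \le 2n-5$, again tight at $k_1=k_2=n-2$.

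The main obstacle, and the only nonroutine point, is hitting $3n-8$ rather than $3n-7$ in the ELRF case: a naive analysis pays separately for making $T_1$ uniform, re-flipping the central node, and restoring $T_2$, which overshoots by one. The averaging observation that a single common intermediate label $c$ can be chosen so that $\min(s_1+s_2, d_1+d_2)\le\lfloor(k_1+k_2)/2\rfloor$ eliminates any separate ``middle'' flip and is what makes the bound tight. The remaining verifications—that the star is reachable from and expandable into an arbitrary labeled topology on the fixed leaf set, and that the function $(k_1-1)+(k_2-1)+\lfloor(k_1+k_2)/2\rfloor$ is maximized at binary trees—are routine.
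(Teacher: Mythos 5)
Your proof is correct, and it follows the same overall route as the paper's: both arguments transform $\T_1$ into $\T_2$ by passing through the star tree, paying for contractions, extensions, and label flips, and both note that LRF is cheaper because contractions and insertions carry no label constraint (your LRF count is identical to the paper's). The genuine difference is in how the ELRF relabeling cost is accounted. The paper makes the labels of each tree uniform \emph{independently}, flipping the minority label of each tree at cost $\lfloor (n-2)/2 \rfloor$ per tree, and then possibly pays one extra flip at the center of the star; since $2\lfloor (n-2)/2\rfloor + 1 = n-1$ when $n$ is even, this naive count gives $3n-7$, and the paper must resort to a case analysis on the parity of $n$ and on whether some label occurs strictly more than $(n-2)/2$ times in one of the trees. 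Your choice of a single common intermediate label $c$, minimized jointly over both trees so that the total relabel cost is $\min(s_1+s_2,\,d_1+d_2) \le \lfloor (k_1+k_2)/2 \rfloor$, eliminates the middle flip altogether and makes the parity issue vanish, because the floor is applied to the combined count $k_1+k_2$ rather than to each tree separately. As a bonus, your bound is parametrized by the actual numbers $k_1,k_2$ of internal nodes, yielding the slightly sharper estimate $(k_1-1)+(k_2-1)+\lfloor (k_1+k_2)/2\rfloor$ for non-binary inputs, with $3n-8$ recovered exactly at $k_1=k_2=n-2$. In short: same decomposition, but your joint-minority accounting is cleaner than the paper's per-tree accounting and removes its case analysis.
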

\begin{proof}
    Let $\T_1 = (T_1, l_1), \T_2 = (T_2, l_2)$ be two labeled trees with $n$ leaves. 
    Consider the ELRF distance first.  
    Note that an unrooted tree has at most $n - 3$ internal edges (if we start with $n=3$ leaves, there are $0$ such edges, and by adding one leaf at a time, we create at most one internal edge for each leaf added).  
    We can make at most $n - 3$  contractions on $T_1$ to turn it into a star tree, i.e., a tree with a single internal node.  
    However, contractions must have the same label on both endpoints, so we may need to make all the labels identical before doing so.  There are at most $n - 2$ internal nodes in an unrooted tree. 
    
    If $n$ is odd, we can always achieve the same label everywhere with $\lfloor (n-2)/2 \rfloor = (n - 3)/2$ label flips by changing the label that occurs the least frequently.
    Thus, with $(n - 3)/2 + n - 3$ operations, we can transform $\T_1$ into a star tree.  We can do the same with $\T_2$, so one way of turning $\T_1$ into $\T_2$ is to make $\T_1$ a star tree, possibly flip the label of the internal node, then reverse the path from $\T_2$ to a star tree.  Counting each step, this results in at most 
    \[
    (n-3)/2 + n - 3 + 1 + n - 3 + (n-3)/2 = 3n - 8
    \]
    operations.  
    If $n$ is even, we have two cases.  If there are $(n - 2)/2$ of each label in both $\T_1$ and $\T_2$, we see that $(n - 2)/2 + n - 3$ operations suffice to turn $\T_1$ into the star tree with either label, and the same holds for $\T_2$, and thus that $2((n-2)/2 + n - 3) = 3n - 8$ operations suffice.  If, say, one label $x \in X$ occurs strictly more than $(n - 2)/2$ times in, say, $\T_2$, we can turn $\T_1$ into a star tree with $(n - 2)/2 + n - 3$ flips, flips the single label into $x$ if needed, then perform $n - 3$ extensions, and then strictly less than $(n-2)/2$ relabels.  This also results in at most $3n - 8$ operations.

    The LRF bound uses a similar idea, except that contractions do not need to have their endpoint labels identical.  We can thus perform at most $n - 3$ contractions on $\T_1$ to obtain a star tree, possibly flip the label of the internal node, and perform at most $n - 3$ extensions, adding the correct label each time, to obtain $\T_2$, resulting in $n - 3 + 1 + n - 3 = 2n - 5$ operations.
\end{proof}

The intuition is that we can always contract all $n - 3$ internal edges of the first tree.  In ELRF, we may have to relabel half of the $n - 2$ internal nodes to do this, so using $n - 3 + (n - 2)/2$ operations to reach a star tree (in the proof we show that this bound can be achieved while also attaining any desired label at the root of the star, with some case handling required for odd versus even $n$).  This has to be reversed, leading to $3n - 8$. In LRF, we can just contract all $n - 3$ internal edges directly, possibly relabel the internal node of the star tree, then extend.
It is possible that these bounds are tight.  Consider Figure~\ref{fig:diam} for the ELRF distance.  If we generalize this pattern, it would appear that we need to flip $\lfloor (n-2)/2 \rfloor$ nodes, do $n - 3$ mandatory contractions, flip the central node, and reverse the process.  This results in the upper bound $3n - 8$.  For LRF, one can think of a pair of trees with no label in common, that require the maximum of $2n-6$ contractions and extensions, plus a label flip.  However, proving that such examples cannot be handled better is not trivial, and since these distances are not the focus of the paper, we reserve those for future work.

\section{Methods}
We compared the distribution of the \metricshortname~semi-metric against the classical Robinson-Foulds (RF) and its ELRF and LRF variants. To this end, we designed and implemented a stepwise procedure to simulate reconciled trees.
The software tool to compute $\dml$ is available as open source at: \url{https://pypi.org/project/parle/}.

\subsection{Simulation of reconciliations}\label{sect:recSim}

The existing programs for simulation of reconciliations like AsymmeTree or SaGePhy \cite{AsymmeTree,Kundu-Bansal-2019} operate in a top-bottom fashion by first simulating ancestral genes/species followed by a birth-death process generating speciation, duplication, and loss events among others.
This procedure does not guarantee trees with a fixed set of genes, whereas 
the \metricshortname, LRF, and ELRF metrics require trees with the same set of leaves.
To fulfill this requirement, we designed Algorithm \ref{alg:randomRec}, which takes as input a species tree $S$, as well as a set of genes $\Gamma$ and the assignment of species $\sigma:\Gamma\rightarrow L(S)$, then builds a reconciled gene tree over leafset $\Gamma$ in a bottom-up fashion. At each iteration it picks pairs of genes $x',x''\in \Gamma$ and substitutes them with a newly created node $x$, being the parent of the chosen genes. Finally, $x$ is associated with an event and mapped to the species tree in Line~\ref{line:event}.
Algorithm \ref{alg:randomRec} uses the lca-mapping $\mu$ for the generated gene trees.  It is known that this map satisfies time-consistency, and that a node $x$ with children $x', x''$ can be a speciation if and only if $\mu(x) \notin \{\mu(x'), \mu(x'')\}${\cite{gorecki2006dls}}.  If this is not satisfied, the algorithm assigns $l(x) = dup$, and otherwise chooses $l(x) \in \{dup, spec\}$, which guarantees the \emph{speciations separate species} condition.

Algorithm \ref{alg:randomRec} considers a probability distribution $P$ of picking $x',x''\in \Gamma$. In our implementation, this probability decays exponentially w.r.t. the distance between the species where $x'$ and $x''$ reside, in other words, the larger $d=dist_S(\mu(x'), \mu(x''))$ is, the smaller the chance of choosing $x',x''$. In particular, we use the probability $e^{-0.7 d}$.
This approach is intended to prevent close elements in the gene tree from being mapped to distant nodes in the species tree, such a setting causes most of the inner nodes in the gene tree to be mapped near the root of the species tree, which would in turn create many $dup$ nodes.

In total, we generated 9 sets of random reconciliations, obtained as follows.
First, we generated three species trees $S_n$, where $n$ is the number of leaves: $S_{10}$, $S_{25}$, and $S_{50}$, using the \texttt{AsymmeTree} package \cite{AsymmeTree} under the \emph{innovations model} as described in \cite{innovations}.
For each species tree $S_i$ we generated the gene sets $\Gamma_{i,5}$, $\Gamma_{i,10}$, and $\Gamma_{i,20}$, together with the assignments of species $\sigma_{i,5}$, $\sigma_{i,10}$, and $\sigma_{i,20}$ in such a way that for the set $\Gamma_{i,j}$ each species $y \in L(S_i)$ contains at least one gene and at most $j$ genes. Considering this restriction, the number of genes for each species was chosen with uniform probability.\\

\begin{algorithm}[H]
\DontPrintSemicolon
\SetKwProg{Fn}{function}{}{}
\Fn{generate\_random\_scenario($S,\Gamma,\sigma$)}{
    \tcp{$S$ is a species tree, $\Gamma$ is the set of genes, $\sigma$ is a map from $\Gamma$ to their species.}
    Initialise $\G = (G,S,\mu,l)$ with $L(G) = \Gamma$ such that $\mu$ maps every leaf to their corresponding species in $L(S)$ according to $\sigma$\;
    \While{$|\Gamma|>1$}
    {
    Pick two genes $x',x''$ in $\Gamma$ according to a probability distribution $P$.\;
    Create a new node $x$ as the parent of $x'$ and $x''$.\;
    \tcp{Set reconciliation map and label of the new node.}

    Set $\mu(x) = lca_S(\mu(x'), \mu(x''))$\;
    \lIf {$\mu(x) \in \{\mu(x'), \mu(x'')\}$\label{line:event}}
     {$l(x) = dup$}
     \lElse{ choose $l(x)$ from $\{ dup, spec \}$ with uniform probability}
     $\Gamma\gets (\Gamma\setminus\{x',x''\})\cup\{x\}$\tcp*{Update set of genes}
    }
    return $\G$\;
  }
  
\caption{Simulation of random reconciliation scenarios.}
\label{alg:randomRec}
  
\end{algorithm}

\paragraph*{Distance distribution and normalization}\label{sect:distaces}
Given a set $R_{i,j}$ of random reconciliations generated from $S_i$ and $\Gamma_{i,j}$, we computed the \metricshortname, ELRF, LRF, and RF measures for each pair of different reconciliations. We set $|R_{i,j}|=50$, resulting in 1225 total comparisons per set of random reconciliations.
As argued in Section~\ref{sect:alphaRemark}, the parameter $\alpha$ of \metricshortname~is aimed to balance the quadratic-versus-linear components of the distance.
Following this analysis, we set $\alpha=1/i$ for the dataset $R_{i,j}$. Furthermore, to address the impact of $\alpha$ on the metric we also used the values 0, 0.25, 0.5, 0.75, and 1.

We normalized the distances obtained to have a fair comparison between the distribution of the different measures. We used two strategies, first, we normalized PLR by the theoretical diameter of the distance, while ELFR by its upper bond, and second by the empirical max normalization, which consists of dividing each computed value of a measure by the maximum encountered in the dataset for that measure.

\subsection{Computational results}

\paragraph*{Comparisons with max-normalization}

Each subplot of Figure \ref{Fig:histograms} shows four distributions comparing the PLR, ELRF, LRF, and RF metrics represented in blue, light orange, green, and red, respectively.

The ELRF, LRF, and RF distributions exhibit right-skewness, indicating that many data points cluster towards higher values. This skewness suggests a higher frequency of larger distances, a common trait among these metrics. Notably, the RF metric often shows smaller distances because it ignores label changes, whereas the ELRF and LRF metrics yield almost identical values, performing very similarly, as expected.

In contrast, the PLR distribution is centered around its mean, displaying a broader spread of measurements. This symmetric distribution indicates that the PLR metric has a greater variability in distance measurements, highlighting its sensitivity, that is, 
  a balanced penalization of all the elements of an evolutionary scenario. This contrasts with the more concentrated and nearly identical distributions of ELRF, LRF, and RF.

\begin{figure}[ht]
    \centering
    \includegraphics[width=0.9\textwidth]{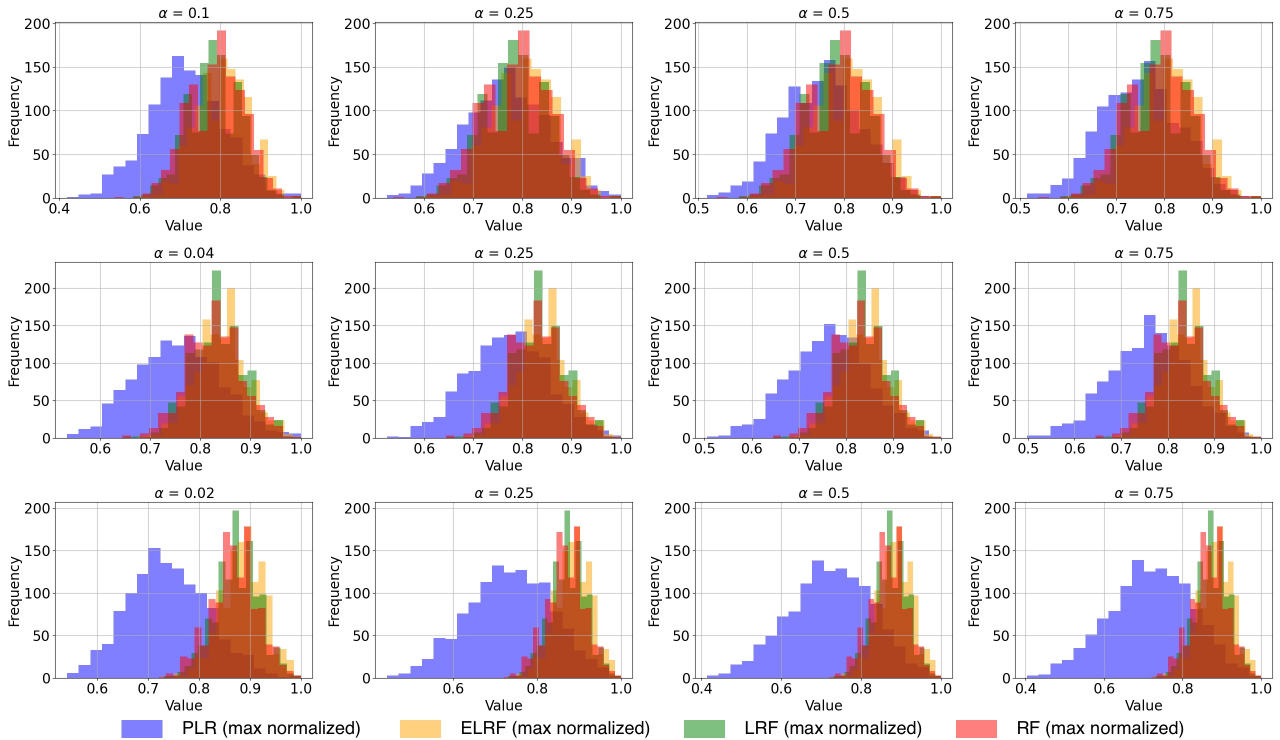}
    \caption{Distributions of the PLR, ELRF, LRF, and RF metrics for datasets $\Gamma_{10,20}$, $\Gamma_{25,10}$, and $\Gamma_{50,5}$, from top to bottom rows, respectively,  and alpha values from the set $\{\frac{1}{n}, 0.25, 0.5, 0.75\}$, with $n$ as number of species. Each row corresponds to a dataset, while each column represents a different value of $\alpha$. The $x$-axis represents max-normalized values ranging from $0$ to $1$, and the $y$-axis is the frequency of these values. The PLR measure in purple shows a centered and symmetric distribution with a broader spread. The ELRF, LRF, and RF metrics, shown in light orange, green, and red, respectively, exhibit right-skewed distributions towards the higher end of the scale.
}
    \label{Fig:histograms}
\end{figure}

\paragraph*{The theoretical diameter is hard to reach}

Figure \ref{fig:violin_plot} presents the distribution of the ELRF distance and the \metricshortname~distance for various values of the parameter $\alpha$. We omit the plots for LRF and RF distances since they closely resemble the ELRF distributions, as discussed in the previous section.

\begin{figure}[h]
  \centering
  \begin{tabular}{cc}
    \includegraphics[width=0.48\textwidth]{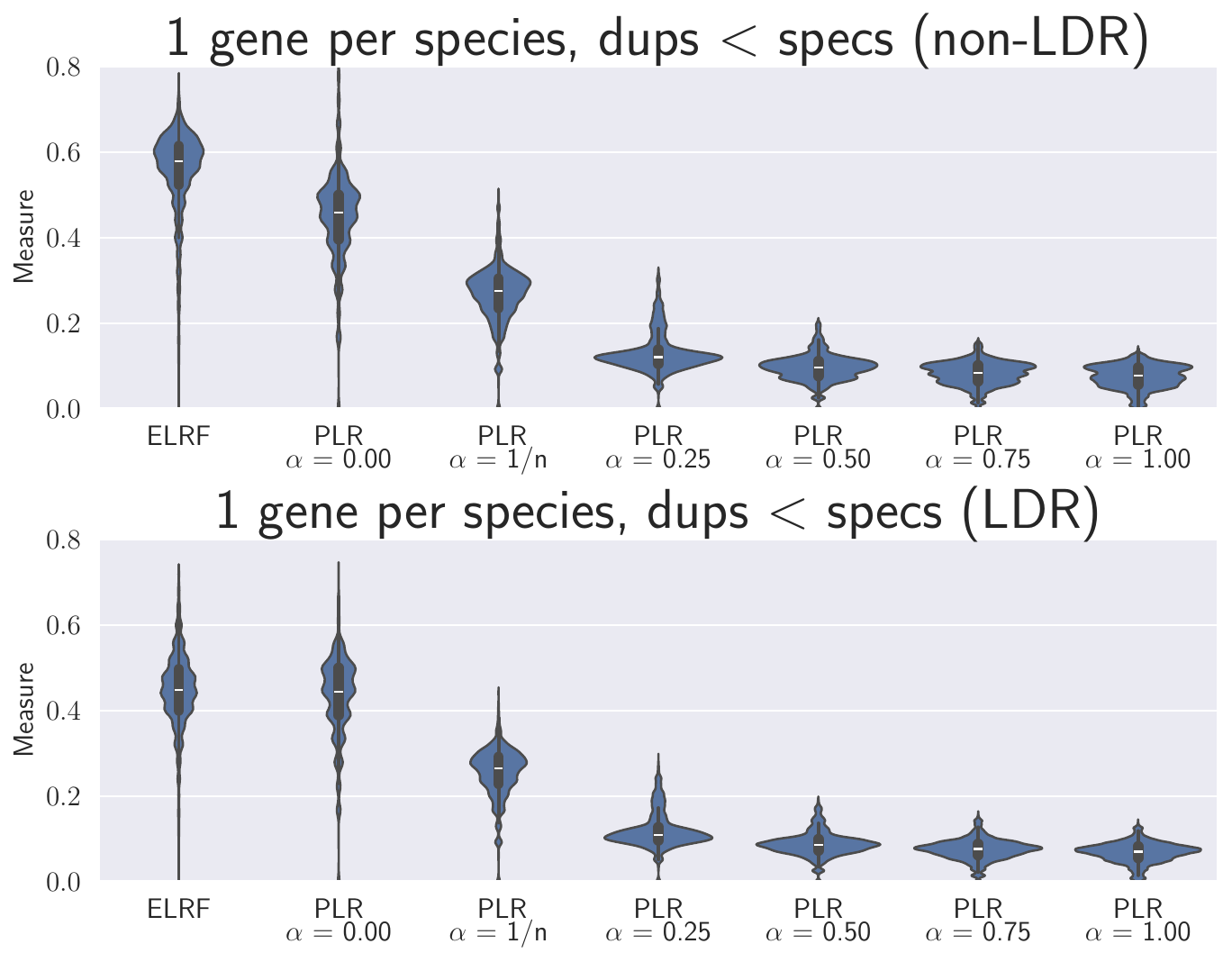} & \includegraphics[width=0.48\textwidth]{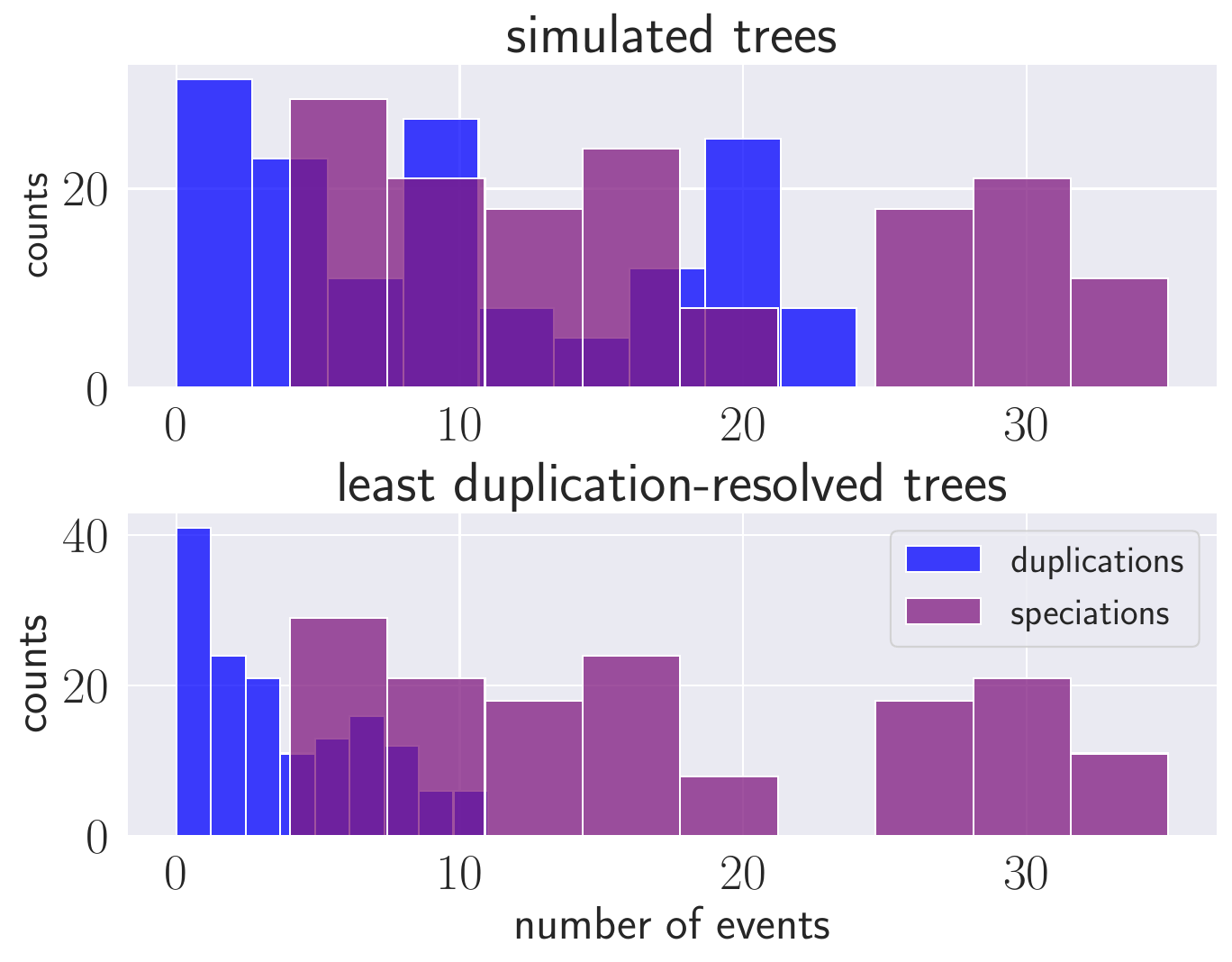}\\
    \includegraphics[width=0.48\textwidth]{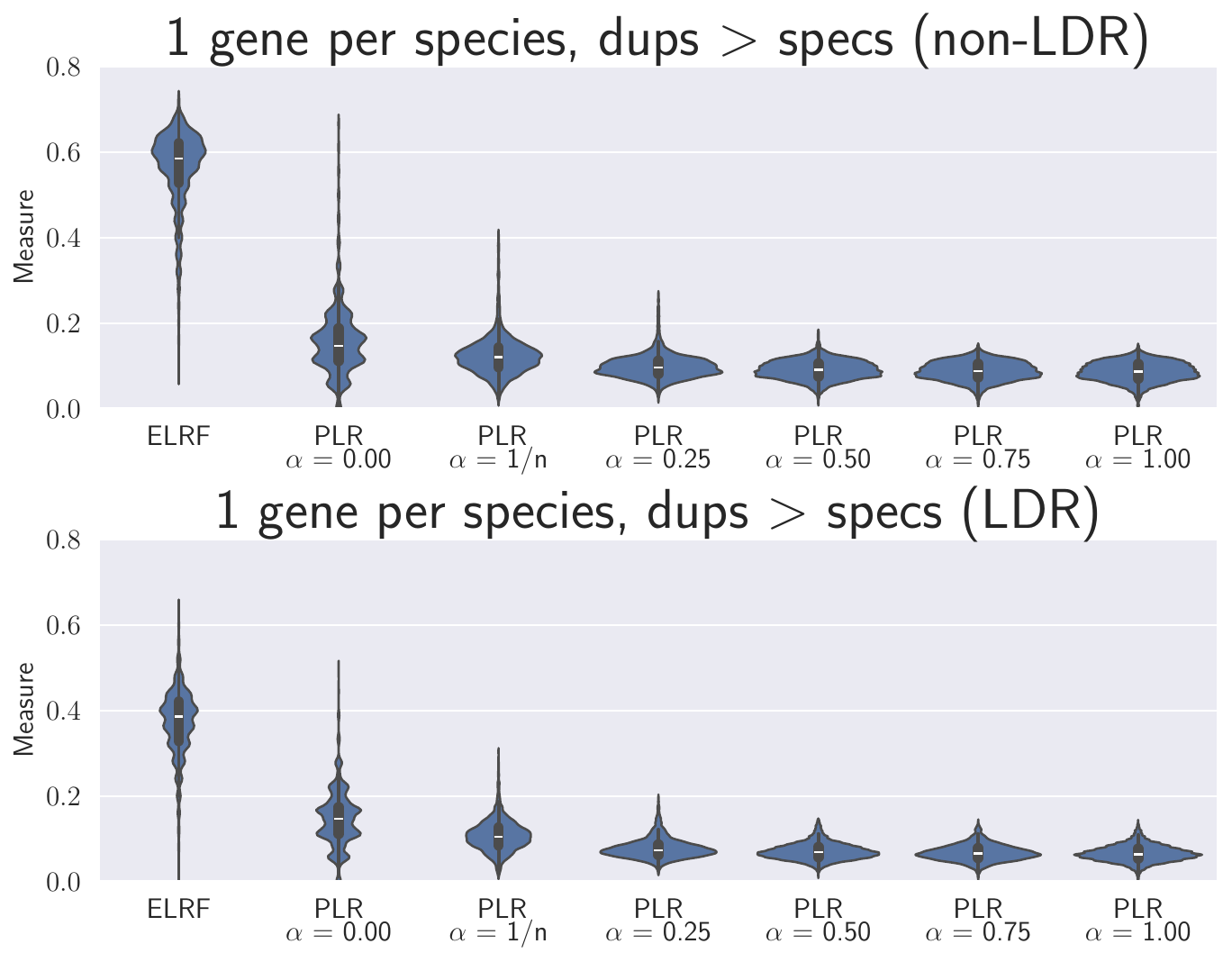} & \includegraphics[width=0.48\textwidth]{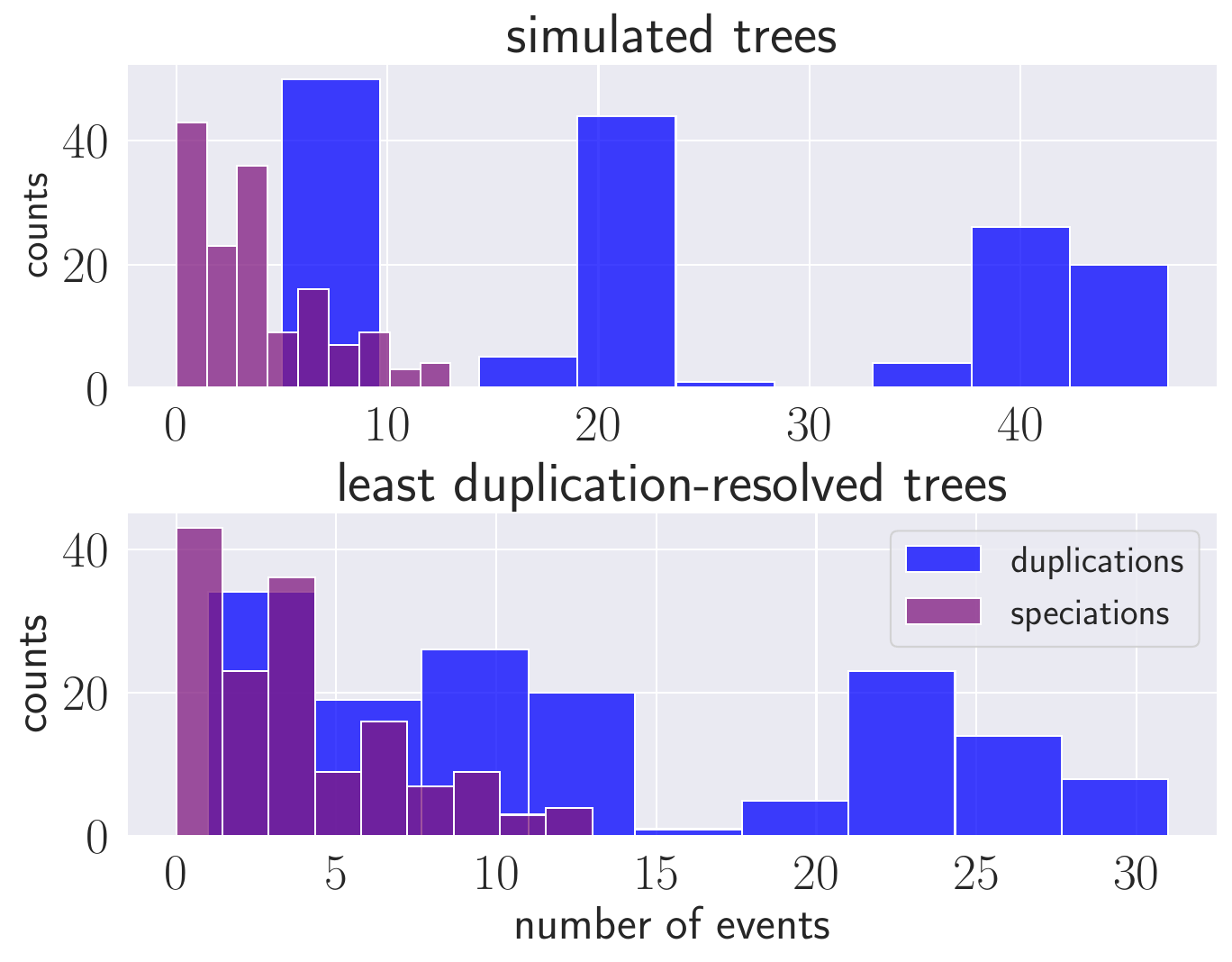}
  \end{tabular}
  \caption{Comparison of the distribution of ELRF and \metricshortname~measures with different values for the parameter $\alpha$, and different proportions of duplication/speciation events. The measures are shown for both the least duplication-resolved trees (LDR) and non-LDR. All the plots consider reconciliations with 10, 25, and 50 species. The parameter $\alpha=1/n$ aims to balance the linear-vs-quadratic components of the distance, where $n$ is the number of species.
Note that the biggest change in the distribution of the \metricshortname~measure happens for small values of $\alpha$.}
  \label{fig:violin_plot}
\end{figure}

The first two rows in Figure \ref{fig:violin_plot} compare trees with fewer duplications than speciations, while the subsequent rows involve trees with an equal or greater number of duplications compared to speciations. The PLR measure is normalized by the theoretical diameter introduced here, whereas the ELRF is normalized by its upper bound. Note that ELRF consistently has higher values than PLR and that these values are significantly far from the theoretical diameter.
The shape of the \metricshortname~distribution remains largely unchanged as $\alpha$ increases, likely due to the diminishing contribution of the linear component relative to the quadratic component as $\alpha$ grows.
On the right side of the figure, we observe the frequency of speciation and duplication events in our simulated reconciled trees, as well as their least duplication-resolved (LDR) counterparts. Notably, when there are more speciations than duplications, the PLR measure increases but still remains far from the theoretical diameter.

Figure \ref{fig:example-recons} illustrates important differences between the measures, since we can observe two different scenarios: 1) where ELRF is significantly smaller than \metricshortname, suggesting that reconciliations may be completely different even when gene tree topologies are similar; and 2) conversely, \metricshortname~may be significantly small when the ELRF is large, suggesting that different gene tree topologies could have similar reconciliations.

\begin{figure}[h]
    \centering
    \includegraphics[width=0.5\textwidth]{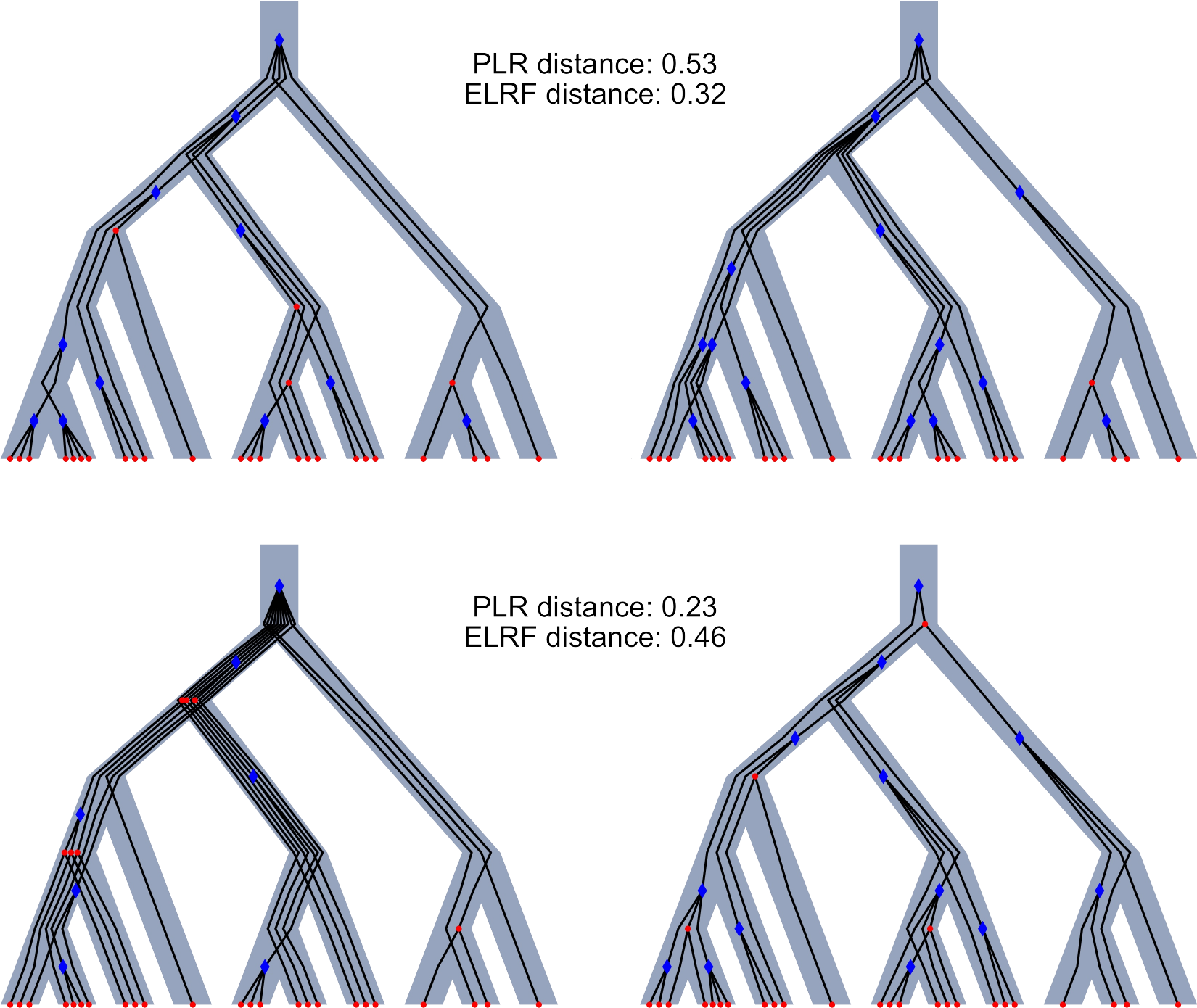}
    \caption{Examples of distance between reconciliations and gene trees, plotted using \texttt{REvolutionH-tl} \cite{revolutionhtl}. The reconciliations have 10 species and 24 genes, with $\alpha=1/10$. The upper row has a large \metricshortname~value but a small ELRF distance. In contrast, the bottom row shows trees when \metricshortname~is small even when ELRF is big.
    In this example, we set $\alpha=1/10$.}
    \label{fig:example-recons}
\end{figure}

\section{Discussion}

In this work, we have underscored the unique attributes of \metricshortname, a novel semi-metric designed for comparing reconciled gene trees within a fixed species tree framework. Unlike existing metrics such as RF, LRF, and ELRF, which primarily focus on tree topology, \metricshortname~incorporates all elements of an evolutionary scenario: a species tree, gene trees, speciation/duplication labeling and a mapping from gene trees to species tree. This broader scope provides a more holistic measure of dissimilarity between reconciled gene trees, offering researchers a nuanced understanding of evolutionary relationships.

One notable advantage of \metricshortname~is its flexibility, particularly regarding the parameter $\alpha$, which allows users to balance the quadratic and linear components of the distance according to their specific research context. This flexibility enhances the metric's applicability across diverse evolutionary scenarios, providing researchers with a customizable tool for reconciliation analysis. Additionally, our experiments reveal that \metricshortname~exhibits a symmetric and broadly spread distribution around its mean, indicating sensitivity to variations in reconciliations and finer granularity in distinguishing between different tree pairs. Despite its strengths, \metricshortname~does have some limitations. For instance, while the flexibility of $\alpha$ is advantageous, it also introduces a degree of subjectivity into the metric's application, as users must determine the appropriate value for their specific context. Moreover, our theoretical analysis highlights a large theoretical diameter for \metricshortname, which is seldom reached in practice. Tighter bounds are needed to improve practical applicability and interpretability. One of the key strengths of \metricshortname~is its computational efficiency, with an $O(n)$ time complexity. This efficiency is particularly beneficial for analyzing large datasets or trees, where computational resources and time are critical constraints. 

Looking ahead, future directions for \metricshortname~include refining the theoretical bounds of its diameter.
An important theoretical problem that remains open is determining whether \emph{binary} gene trees satisfy the triangle inequality. Additionally, developing metrics between gene trees with different leaf sets would significantly broaden its applicability. Incorporating alternative methods for matching ancestral genes, such as those proposed by Lin et al. \cite{lin2011metric}, or using asymmetric cluster affinity as suggested by Wagle \cite{Wagle2024}, could further enhance the metric's accuracy and relevance.

In conclusion, \metricshortname~represents a significant advancement in the comparison of reconciled gene trees, offering a detailed and flexible measure of dissimilarity. Its computational efficiency and comprehensive event consideration make it a valuable tool for evolutionary studies, with potential for further refinement and application in future research.

\section*{Acknowledgements}
The authors would like to thank the reviewers for their helpful comments.

\section*{Funding}
Alitzel L\'opez S\'anchez acknowledges financial support from the programme de bourses d'excellence en recherche from the University of Sherbrooke.

José Antonio {Ramírez-Rafael}  ackowledges financial support from the CONAHCYT scholarship, Mexico.

Manuel Lafond acknowledges financial support from the Natural Sciences and Engineering Research Council (NSERC) and the Fonds de Recherche du Québec Nature et technologies (FRQNT).

\bibliography{preprint_plr}

\begin{thebibliography}{68}
\providecommand{\natexlab}[1]{#1}
\providecommand{\url}[1]{\texttt{#1}}
\expandafter\ifx\csname urlstyle\endcsname\relax
  \providecommand{\doi}[1]{doi: #1}\else
  \providecommand{\doi}{doi: \begingroup \urlstyle{rm}\Url}\fi

\bibitem[{\AA}kerborg et~al.(2009){\AA}kerborg, Sennblad, Arvestad, and
  Lagergren]{aakerborg2009simultaneous}
{\"O}.~{\AA}kerborg, B.~Sennblad, L.~Arvestad, and J.~Lagergren.
\newblock Simultaneous bayesian gene tree reconstruction and reconciliation
  analysis.
\newblock \emph{Proceedings of the National Academy of Sciences}, 106\penalty0
  (14):\penalty0 5714--5719, 2009.

\bibitem[Anselmetti et~al.(2021)Anselmetti, El-Mabrouk, Lafond, and
  Ouangraoua]{anselmetti2021gene}
Y.~Anselmetti, N.~El-Mabrouk, M.~Lafond, and A.~Ouangraoua.
\newblock Gene tree and species tree reconciliation with endosymbiotic gene
  transfer.
\newblock \emph{Bioinformatics}, 37\penalty0 (Supplement\_1):\penalty0
  i120--i132, 2021.

\bibitem[Arvestad et~al.(2003)Arvestad, Berglund, Lagergren, and
  Sennblad]{arvestad2003bayesian}
L.~Arvestad, A.-C. Berglund, J.~Lagergren, and B.~Sennblad.
\newblock Bayesian gene/species tree reconciliation and orthology analysis
  using mcmc.
\newblock \emph{Bioinformatics-Oxford}, 19\penalty0 (1):\penalty0 7--15, 2003.

\bibitem[Bansal and Eulenstein(2008)]{bansal2008multiple}
M.~S. Bansal and O.~Eulenstein.
\newblock The multiple gene duplication problem revisited.
\newblock \emph{Bioinformatics}, 24\penalty0 (13):\penalty0 i132--i138, 2008.

\bibitem[Bansal et~al.(2012)Bansal, Alm, and Kellis]{bansal2012efficient}
M.~S. Bansal, E.~J. Alm, and M.~Kellis.
\newblock Efficient algorithms for the reconciliation problem with gene
  duplication, horizontal transfer and loss.
\newblock \emph{Bioinformatics}, 28\penalty0 (12):\penalty0 i283--i291, 2012.

\bibitem[Bansal et~al.(2018)Bansal, Kellis, Kordi, and Kundu]{bansal2018ranger}
M.~S. Bansal, M.~Kellis, M.~Kordi, and S.~Kundu.
\newblock Ranger-dtl 2.0: rigorous reconstruction of gene-family evolution by
  duplication, transfer and loss.
\newblock \emph{Bioinformatics}, 34\penalty0 (18):\penalty0 3214--3216, 2018.

\bibitem[Batut et~al.(2013)Batut, Parsons, Fischer, Beslon, and
  Knibbe]{batut2013silico}
B.~Batut, D.~P. Parsons, S.~Fischer, G.~Beslon, and C.~Knibbe.
\newblock In silico experimental evolution: a tool to test evolutionary
  scenarios.
\newblock In \emph{BMC bioinformatics}, volume~14, pages 1--11. Springer, 2013.

\bibitem[Bender and Farach-Colton(2000)]{bender2000lca}
M.~A. Bender and M.~Farach-Colton.
\newblock The lca problem revisited.
\newblock In \emph{LATIN 2000: Theoretical Informatics: 4th Latin American
  Symposium, Punta del Este, Uruguay, April 10-14, 2000 Proceedings 4}, pages
  88--94. Springer, 2000.

\bibitem[Bonizzoni et~al.(2005)Bonizzoni, Della~Vedova, and
  Dondi]{bonizzoni2005reconciling}
P.~Bonizzoni, G.~Della~Vedova, and R.~Dondi.
\newblock Reconciling a gene tree to a species tree under the duplication cost
  model.
\newblock \emph{Theoretical computer science}, 347\penalty0 (1-2):\penalty0
  36--53, 2005.

\bibitem[Boussau and Scornavacca(2020)]{boussau2020reconciling}
B.~Boussau and C.~Scornavacca.
\newblock Reconciling gene trees with species trees.
\newblock \emph{Phylogenetics in the genomic era}, pages 3--2, 2020.

\bibitem[Briand et~al.(2020)Briand, Dessimoz, El-Mabrouk, Lafond, and
  Lobinska]{LRF}
S.~Briand, C.~Dessimoz, N.~El-Mabrouk, M.~Lafond, and G.~Lobinska.
\newblock A generalized robinson-foulds distance for labeled trees.
\newblock \emph{{BMC} Genomics}, 21\penalty0 (S10), Nov. 2020.
\newblock \doi{10.1186/s12864-020-07011-0}.
\newblock URL \url{https://doi.org/10.1186/s12864-020-07011-0}.

\bibitem[Briand et~al.(2022)Briand, Dessimoz, El-Mabrouk, and
  Nevers]{briand2022linear}
S.~Briand, C.~Dessimoz, N.~El-Mabrouk, and Y.~Nevers.
\newblock A linear time solution to the labeled robinson--foulds distance
  problem.
\newblock \emph{Systematic Biology}, 71\penalty0 (6):\penalty0 1391--1403,
  2022.

\bibitem[Burleigh et~al.(2008)Burleigh, Bansal, Wehe, and
  Eulenstein]{burleigh2008locating}
J.~G. Burleigh, M.~S. Bansal, A.~Wehe, and O.~Eulenstein.
\newblock Locating multiple gene duplications through reconciled trees.
\newblock In \emph{Research in Computational Molecular Biology: 12th Annual
  International Conference, RECOMB 2008, Singapore, March 30-April 2, 2008.
  Proceedings 12}, pages 273--284. Springer, 2008.

\bibitem[Chan et~al.(2017)Chan, Ranwez, and Scornavacca]{chan2017inferring}
Y.-b. Chan, V.~Ranwez, and C.~Scornavacca.
\newblock Inferring incomplete lineage sorting, duplications, transfers and
  losses with reconciliations.
\newblock \emph{Journal of theoretical biology}, 432:\penalty0 1--13, 2017.

\bibitem[Conow et~al.(2010)Conow, Fielder, Ovadia, and
  Libeskind-Hadas]{conow2010jane}
C.~Conow, D.~Fielder, Y.~Ovadia, and R.~Libeskind-Hadas.
\newblock Jane: a new tool for the cophylogeny reconstruction problem.
\newblock \emph{Algorithms for Molecular Biology}, 5:\penalty0 1--10, 2010.

\bibitem[Dav{\'\i}n et~al.(2020)Dav{\'\i}n, Tricou, Tannier, de~Vienne, and
  Sz{\"o}ll{\H{o}}si]{davin2020zombi}
A.~A. Dav{\'\i}n, T.~Tricou, E.~Tannier, D.~M. de~Vienne, and G.~J.
  Sz{\"o}ll{\H{o}}si.
\newblock Zombi: a phylogenetic simulator of trees, genomes and sequences that
  accounts for dead linages.
\newblock \emph{Bioinformatics}, 36\penalty0 (4):\penalty0 1286--1288, 2020.

\bibitem[Delabre et~al.(2018)Delabre, El-Mabrouk, Huber, Lafond, Moulton,
  Noutahi, and Castellanos]{delabre2018reconstructing}
M.~Delabre, N.~El-Mabrouk, K.~T. Huber, M.~Lafond, V.~Moulton, E.~Noutahi, and
  M.~S. Castellanos.
\newblock Reconstructing the history of syntenies through super-reconciliation.
\newblock In \emph{Comparative Genomics: 16th International Conference,
  RECOMB-CG 2018, Magog-Orford, QC, Canada, October 9-12, 2018, Proceedings
  16}, pages 179--195. Springer, 2018.

\bibitem[Dondi et~al.(2019)Dondi, Lafond, and
  Scornavacca]{dondi2019reconciling}
R.~Dondi, M.~Lafond, and C.~Scornavacca.
\newblock Reconciling multiple genes trees via segmental duplications and
  losses.
\newblock \emph{Algorithms for Molecular Biology}, 14:\penalty0 1--19, 2019.

\bibitem[Doyon et~al.(2010)Doyon, Scornavacca, Gorbunov, Sz{\"o}ll{\H{o}}si,
  Ranwez, and Berry]{doyon2010efficient}
J.-P. Doyon, C.~Scornavacca, K.~Y. Gorbunov, G.~J. Sz{\"o}ll{\H{o}}si,
  V.~Ranwez, and V.~Berry.
\newblock An efficient algorithm for gene/species trees parsimonious
  reconciliation with losses, duplications and transfers.
\newblock In \emph{Comparative Genomics: International Workshop, RECOMB-CG
  2010, Ottawa, Canada, October 9-11, 2010. Proceedings 8}, pages 93--108.
  Springer, 2010.

\bibitem[Durand et~al.(2005)Durand, Halld{\'o}rsson, and
  Vernot]{durand2005hybrid}
D.~Durand, B.~V. Halld{\'o}rsson, and B.~Vernot.
\newblock A hybrid micro-macroevolutionary approach to gene tree
  reconstruction.
\newblock In \emph{Research in Computational Molecular Biology: 9th Annual
  International Conference, RECOMB 2005, Cambridge, MA, USA, May 14-18, 2005.
  Proceedings 9}, pages 250--264. Springer, 2005.

\bibitem[Gei{\ss} et~al.(2020)Gei{\ss}, Laffitte, S{\'a}nchez, Valdivia,
  Hellmuth, Rosales, and Stadler]{geiss2020best}
M.~Gei{\ss}, M.~E.~G. Laffitte, A.~L. S{\'a}nchez, D.~I. Valdivia, M.~Hellmuth,
  M.~H. Rosales, and P.~F. Stadler.
\newblock Best match graphs and reconciliation of gene trees with species
  trees.
\newblock \emph{Journal of mathematical biology}, 80\penalty0 (5):\penalty0
  1459--1495, 2020.

\bibitem[Geiß et~al.(2020)Geiß, Laffitte, Sánchez, Valdivia, Hellmuth,
  Rosales, and Stadler]{Gei2020}
M.~Geiß, M.~E.~G. Laffitte, A.~L. Sánchez, D.~I. Valdivia, M.~Hellmuth, M.~H.
  Rosales, and P.~F. Stadler.
\newblock Best match graphs and reconciliation of gene trees with species
  trees.
\newblock \emph{Journal of Mathematical Biology}, 80\penalty0 (5):\penalty0
  1459–1495, Jan. 2020.
\newblock ISSN 1432-1416.
\newblock \doi{10.1007/s00285-020-01469-y}.
\newblock URL \url{http://dx.doi.org/10.1007/s00285-020-01469-y}.

\bibitem[Goloboff et~al.(2017)Goloboff, Arias, and Szumik]{Goloboff2017-mr}
P.~A. Goloboff, J.~S. Arias, and C.~A. Szumik.
\newblock Comparing tree shapes: beyond symmetry.
\newblock \emph{Zool. Scr.}, 46\penalty0 (5):\penalty0 637--648, Sept. 2017.

\bibitem[Goodman et~al.(1979)Goodman, Czelusniak, Moore, Romero-Herrera, and
  Matsuda]{goodman1979fitting}
M.~Goodman, J.~Czelusniak, G.~W. Moore, A.~E. Romero-Herrera, and G.~Matsuda.
\newblock Fitting the gene lineage into its species lineage, a parsimony
  strategy illustrated by cladograms constructed from globin sequences.
\newblock \emph{Systematic Biology}, 28\penalty0 (2):\penalty0 132--163, 1979.

\bibitem[G{\'o}recki(2004)]{gorecki2004reconciliation}
P.~G{\'o}recki.
\newblock Reconciliation problems for duplication, loss and horizontal gene
  transfer.
\newblock In \emph{Proceedings of the eighth annual international conference on
  Research in computational molecular biology}, pages 316--325, 2004.

\bibitem[G{\'o}recki and Tiuryn(2006)]{gorecki2006dls}
P.~G{\'o}recki and J.~Tiuryn.
\newblock Dls-trees: a model of evolutionary scenarios.
\newblock \emph{Theoretical computer science}, 359\penalty0 (1-3):\penalty0
  378--399, 2006.

\bibitem[G{\'o}recki et~al.(2024)G{\'o}recki, Rutecka, Mykowiecka, and
  Paszek]{gorecki2024unifying}
P.~G{\'o}recki, N.~Rutecka, A.~Mykowiecka, and J.~Paszek.
\newblock Unifying duplication episode clustering and gene-species mapping
  inference.
\newblock \emph{Algorithms for Molecular Biology}, 19\penalty0 (1):\penalty0
  1--20, 2024.

\bibitem[Hasi{\'c} and Tannier(2019)]{hasic2019gene}
D.~Hasi{\'c} and E.~Tannier.
\newblock Gene tree species tree reconciliation with gene conversion.
\newblock \emph{Journal of mathematical biology}, 78\penalty0 (6):\penalty0
  1981--2014, 2019.

\bibitem[Hellmuth et~al.(2012)Hellmuth, Hernandez-Rosales, Huber, Moulton,
  Stadler, and Wieseke]{Hellmuth2012}
M.~Hellmuth, M.~Hernandez-Rosales, K.~T. Huber, V.~Moulton, P.~F. Stadler, and
  N.~Wieseke.
\newblock Orthology relations, symbolic ultrametrics, and cographs.
\newblock \emph{Journal of Mathematical Biology}, 66\penalty0 (1–2):\penalty0
  399–420, Mar. 2012.
\newblock ISSN 1432-1416.
\newblock \doi{10.1007/s00285-012-0525-x}.
\newblock URL \url{http://dx.doi.org/10.1007/s00285-012-0525-x}.

\bibitem[Hernandez-Rosales et~al.(2012)Hernandez-Rosales, Hellmuth, Wieseke,
  Huber, Moulton, and Stadler]{hernandez2012event}
M.~Hernandez-Rosales, M.~Hellmuth, N.~Wieseke, K.~T. Huber, V.~Moulton, and
  P.~F. Stadler.
\newblock From event-labeled gene trees to species trees.
\newblock In \emph{BMC bioinformatics}, volume~13, pages 1--11. Springer, 2012.

\bibitem[Huber et~al.(2018)Huber, Moulton, Sagot, and Sinaimeri]{path_dist}
K.~T. Huber, V.~Moulton, M.-F. Sagot, and B.~Sinaimeri.
\newblock Geometric medians in reconciliation spaces of phylogenetic trees.
\newblock \emph{Information Processing Letters}, 136:\penalty0 96--101, Aug.
  2018.
\newblock \doi{10.1016/j.ipl.2018.04.001}.
\newblock URL \url{https://doi.org/10.1016/j.ipl.2018.04.001}.

\bibitem[Jacox et~al.(2016)Jacox, Chauve, Sz{\"o}ll{\H{o}}si, Ponty, and
  Scornavacca]{jacox2016eccetera}
E.~Jacox, C.~Chauve, G.~J. Sz{\"o}ll{\H{o}}si, Y.~Ponty, and C.~Scornavacca.
\newblock eccetera: comprehensive gene tree-species tree reconciliation using
  parsimony.
\newblock \emph{Bioinformatics}, 32\penalty0 (13):\penalty0 2056--2058, 2016.

\bibitem[Jacox et~al.(2017)Jacox, Weller, Tannier, and
  Scornavacca]{jacox2017resolution}
E.~Jacox, M.~Weller, E.~Tannier, and C.~Scornavacca.
\newblock Resolution and reconciliation of non-binary gene trees with
  transfers, duplications and losses.
\newblock \emph{Bioinformatics}, 33\penalty0 (7):\penalty0 980--987, 2017.

\bibitem[Keller-Schmidt and Klemm(2012)]{innovations}
S.~Keller-Schmidt and K.~Klemm.
\newblock A model of macroevolution as a branching process based on
  innovations.
\newblock \emph{Advances in Complex Systems}, 15\penalty0 (07):\penalty0
  1250043, 2012.

\bibitem[Kordi and Bansal(2016)]{kordi2016exact}
M.~Kordi and M.~S. Bansal.
\newblock Exact algorithms for duplication-transfer-loss reconciliation with
  non-binary gene trees.
\newblock In \emph{Proceedings of the 7th ACM International Conference on
  Bioinformatics, Computational Biology, and Health Informatics}, pages
  297--306, 2016.

\bibitem[Kordi et~al.(2019)Kordi, Kundu, and Bansal]{kordi2019inferring}
M.~Kordi, S.~Kundu, and M.~S. Bansal.
\newblock On inferring additive and replacing horizontal gene transfers through
  phylogenetic reconciliation.
\newblock In \emph{Proceedings of the 10th ACM International Conference on
  Bioinformatics, Computational Biology and Health Informatics}, pages
  514--523, 2019.

\bibitem[Kuitche et~al.(2017)Kuitche, Lafond, and
  Ouangraoua]{kuitche2017reconstructing}
E.~Kuitche, M.~Lafond, and A.~Ouangraoua.
\newblock Reconstructing protein and gene phylogenies using reconciliation and
  soft-clustering.
\newblock \emph{Journal of bioinformatics and computational biology},
  15\penalty0 (06):\penalty0 1740007, 2017.

\bibitem[Kundu and Bansal(2019)]{Kundu-Bansal-2019}
S.~Kundu and M.~S. Bansal.
\newblock {SaGePhy: an improved phylogenetic simulation framework for gene and
  subgene evolution}.
\newblock \emph{Bioinformatics}, 35\penalty0 (18):\penalty0 3496--3498, 02
  2019.
\newblock ISSN 1367-4803.
\newblock \doi{10.1093/bioinformatics/btz081}.
\newblock URL \url{https://doi.org/10.1093/bioinformatics/btz081}.

\bibitem[Lafond et~al.(2012)Lafond, Swenson, and El-Mabrouk]{lafond2012optimal}
M.~Lafond, K.~M. Swenson, and N.~El-Mabrouk.
\newblock An optimal reconciliation algorithm for gene trees with polytomies.
\newblock In \emph{Algorithms in Bioinformatics: 12th International Workshop,
  WABI 2012, Ljubljana, Slovenia, September 10-12, 2012. Proceedings 12}, pages
  106--122. Springer, 2012.

\bibitem[Lafond et~al.(2013)Lafond, Swenson, and El-Mabrouk]{lafond2013error}
M.~Lafond, K.~M. Swenson, and N.~El-Mabrouk.
\newblock Error detection and correction of gene trees.
\newblock \emph{Models and algorithms for genome evolution}, pages 261--285,
  2013.

\bibitem[Larget et~al.(2010)Larget, Kotha, Dewey, and An{\'e}]{larget2010bucky}
B.~R. Larget, S.~K. Kotha, C.~N. Dewey, and C.~An{\'e}.
\newblock Bucky: gene tree/species tree reconciliation with bayesian
  concordance analysis.
\newblock \emph{Bioinformatics}, 26\penalty0 (22):\penalty0 2910--2911, 2010.

\bibitem[Li and Bansal(2019)]{li2019simultaneous}
L.~Li and M.~S. Bansal.
\newblock Simultaneous multi-domain-multi-gene reconciliation under the
  domain-gene-species reconciliation model.
\newblock In \emph{Bioinformatics Research and Applications: 15th International
  Symposium, ISBRA 2019, Barcelona, Spain, June 3--6, 2019, Proceedings 15},
  pages 73--86. Springer, 2019.

\bibitem[Li et~al.(2021)Li, Scornavacca, Galtier, and Chan]{li2021multilocus}
Q.~Li, C.~Scornavacca, N.~Galtier, and Y.-B. Chan.
\newblock The multilocus multispecies coalescent: a flexible new model of gene
  family evolution.
\newblock \emph{Systematic Biology}, 70\penalty0 (4):\penalty0 822--837, 2021.

\bibitem[Lin et~al.(2011)Lin, Rajan, and Moret]{lin2011metric}
Y.~Lin, V.~Rajan, and B.~M. Moret.
\newblock A metric for phylogenetic trees based on matching.
\newblock \emph{IEEE/ACM Transactions on Computational Biology and
  Bioinformatics}, 9\penalty0 (4):\penalty0 1014--1022, 2011.

\bibitem[Liu et~al.(2021)Liu, Mawhorter, Liu, Santichaivekin, Bush, and
  Libeskind-Hadas]{liu2021maximum}
J.~Liu, R.~Mawhorter, N.~Liu, S.~Santichaivekin, E.~Bush, and
  R.~Libeskind-Hadas.
\newblock Maximum parsimony reconciliation in the dtlor model.
\newblock \emph{BMC bioinformatics}, 22:\penalty0 1--22, 2021.

\bibitem[Makarenkov and Leclerc(2000)]{Makarenkov2000-ki}
V.~Makarenkov and B.~Leclerc.
\newblock Comparison of additive trees using circular orders.
\newblock \emph{J. Comput. Biol.}, 7\penalty0 (5):\penalty0 731--744, 2000.

\bibitem[Mallo et~al.(2016)Mallo, de~Oliveira~Martins, and
  Posada]{mallo2016simphy}
D.~Mallo, L.~de~Oliveira~Martins, and D.~Posada.
\newblock Simphy: phylogenomic simulation of gene, locus, and species trees.
\newblock \emph{Systematic biology}, 65\penalty0 (2):\penalty0 334--344, 2016.

\bibitem[Munzner et~al.(2003)Munzner, Guimbreti{\`e}re, Tasiran, Zhang, and
  Zhou]{Munzner2003-ww}
T.~Munzner, F.~Guimbreti{\`e}re, S.~Tasiran, L.~Zhang, and Y.~Zhou.
\newblock {TreeJuxtaposer}.
\newblock In \emph{{ACM} {SIGGRAPH} 2003 Papers}, New York, NY, USA, July 2003.
  ACM.

\bibitem[N{\o}jgaard et~al.(2018)N{\o}jgaard, Gei{\ss}, Merkle, Stadler,
  Wieseke, and Hellmuth]{nojgaard2018time}
N.~N{\o}jgaard, M.~Gei{\ss}, D.~Merkle, P.~F. Stadler, N.~Wieseke, and
  M.~Hellmuth.
\newblock Time-consistent reconciliation maps and forbidden time travel.
\newblock \emph{Algorithms for Molecular Biology}, 13:\penalty0 1--17, 2018.

\bibitem[Nøjgaard et~al.(2018)Nøjgaard, Geiß, Merkle, Stadler, Wieseke, and
  Hellmuth]{Njgaard2018}
N.~Nøjgaard, M.~Geiß, D.~Merkle, P.~F. Stadler, N.~Wieseke, and M.~Hellmuth.
\newblock Time-consistent reconciliation maps and forbidden time travel.
\newblock \emph{Algorithms for Molecular Biology}, 13\penalty0 (1), Feb. 2018.
\newblock ISSN 1748-7188.
\newblock \doi{10.1186/s13015-018-0121-8}.
\newblock URL \url{http://dx.doi.org/10.1186/s13015-018-0121-8}.

\bibitem[Page and Cotton(2001)]{page2001vertebrate}
R.~D. Page and J.~Cotton.
\newblock Vertebrate phylogenomics: reconciled trees and gene duplications.
\newblock In \emph{Biocomputing 2002}, pages 536--547. World Scientific, 2001.

\bibitem[Paszek and G{\'o}recki(2017)]{paszek2017efficient}
J.~Paszek and P.~G{\'o}recki.
\newblock Efficient algorithms for genomic duplication models.
\newblock \emph{IEEE/ACM transactions on computational biology and
  bioinformatics}, 15\penalty0 (5):\penalty0 1515--1524, 2017.

\bibitem[Puigb{\`o} et~al.(2007)Puigb{\`o}, Garcia-Vallv{\'e}, and
  McInerney]{Puigbo2007-hn}
P.~Puigb{\`o}, S.~Garcia-Vallv{\'e}, and J.~O. McInerney.
\newblock {TOPD/FMTS}: a new software to compare phylogenetic trees.
\newblock \emph{Bioinformatics}, 23\penalty0 (12):\penalty0 1556--1558, June
  2007.

\bibitem[Ram{\'i}rez-Rafael et~al.(2024)Ram{\'i}rez-Rafael, Korchmaros,
  Avi{\~{n}}a-Padilla, L{\'o}pez~S{\'a}nchez, Espa{\~{n}}a-Tinajero, Hellmuth,
  Stadler, and Hern{\'a}ndez-Rosales]{revolutionhtl}
J.~A. Ram{\'i}rez-Rafael, A.~Korchmaros, K.~Avi{\~{n}}a-Padilla,
  A.~L{\'o}pez~S{\'a}nchez, A.~A. Espa{\~{n}}a-Tinajero, M.~Hellmuth, P.~F.
  Stadler, and M.~Hern{\'a}ndez-Rosales.
\newblock Revolutionh-tl: Reconstruction of evolutionary histories tool.
\newblock In C.~Scornavacca and M.~Hern{\'a}ndez-Rosales, editors,
  \emph{Comparative Genomics}, pages 89--109, Cham, 2024. Springer Nature
  Switzerland.
\newblock ISBN 978-3-031-58072-7.

\bibitem[Rasmussen and Kellis(2012)]{rasmussen2012unified}
M.~D. Rasmussen and M.~Kellis.
\newblock Unified modeling of gene duplication, loss, and coalescence using a
  locus tree.
\newblock \emph{Genome research}, 22\penalty0 (4):\penalty0 755--765, 2012.

\bibitem[Santichaivekin et~al.(2021)Santichaivekin, Yang, Liu, Mawhorter,
  Jiang, Wesley, Wu, and Libeskind-Hadas]{santichaivekin2021empress}
S.~Santichaivekin, Q.~Yang, J.~Liu, R.~Mawhorter, J.~Jiang, T.~Wesley, Y.-C.
  Wu, and R.~Libeskind-Hadas.
\newblock empress: a systematic cophylogeny reconciliation tool.
\newblock \emph{Bioinformatics}, 37\penalty0 (16):\penalty0 2481--2482, 2021.

\bibitem[Savage(1983)]{Savage1983-ss}
H.~M. Savage.
\newblock The shape of evolution: systematic tree topology.
\newblock \emph{Biol. J. Linn. Soc. Lond.}, 20\penalty0 (3):\penalty0 225--244,
  Nov. 1983.

\bibitem[Schaller et~al.(2021)Schaller, Lafond, Stadler, Wieseke, and
  Hellmuth]{schaller2021indirect}
D.~Schaller, M.~Lafond, P.~F. Stadler, N.~Wieseke, and M.~Hellmuth.
\newblock Indirect identification of horizontal gene transfer.
\newblock \emph{Journal of mathematical biology}, 83\penalty0 (1):\penalty0 10,
  2021.

\bibitem[Schaller et~al.(2022{\natexlab{a}})Schaller, Hellmuth, and
  Stadler]{AsymmeTree}
D.~Schaller, M.~Hellmuth, and P.~F. Stadler.
\newblock {AsymmeTree}: A flexible python package for the simulation of complex
  gene family histories.
\newblock \emph{Software}, 1\penalty0 (3):\penalty0 276--298, Aug.
  2022{\natexlab{a}}.

\bibitem[Schaller et~al.(2022{\natexlab{b}})Schaller, Hellmuth, and
  Stadler]{schaller2022asymmetree}
D.~Schaller, M.~Hellmuth, and P.~F. Stadler.
\newblock Asymmetree: a flexible python package for the simulation of complex
  gene family histories.
\newblock \emph{Software}, 1\penalty0 (3):\penalty0 276--298,
  2022{\natexlab{b}}.

\bibitem[Scornavacca et~al.(2017)Scornavacca, Mayol, and
  Cardona]{scornavacca2017fast}
C.~Scornavacca, J.~C.~P. Mayol, and G.~Cardona.
\newblock Fast algorithm for the reconciliation of gene trees and lgt networks.
\newblock \emph{Journal of theoretical biology}, 418:\penalty0 129--137, 2017.

\bibitem[Stolzer et~al.(2012)Stolzer, Lai, Xu, Sathaye, Vernot, and
  Durand]{stolzer2012inferring}
M.~Stolzer, H.~Lai, M.~Xu, D.~Sathaye, B.~Vernot, and D.~Durand.
\newblock Inferring duplications, losses, transfers and incomplete lineage
  sorting with nonbinary species trees.
\newblock \emph{Bioinformatics}, 28\penalty0 (18):\penalty0 i409--i415, 2012.

\bibitem[Vernot et~al.(2008)Vernot, Stolzer, Goldman, and
  Durand]{vernot2008reconciliation}
B.~Vernot, M.~Stolzer, A.~Goldman, and D.~Durand.
\newblock Reconciliation with non-binary species trees.
\newblock \emph{Journal of computational biology}, 15\penalty0 (8):\penalty0
  981--1006, 2008.

\bibitem[Wagle et~al.(2024)Wagle, Markin, Górecki, Anderson, and
  Eulenstein]{Wagle2024}
S.~Wagle, A.~Markin, P.~Górecki, T.~K. Anderson, and O.~Eulenstein.
\newblock Asymmetric cluster-based measures for comparative phylogenetics.
\newblock \emph{Journal of Computational Biology}, 31\penalty0 (4):\penalty0
  312–327, Apr. 2024.
\newblock ISSN 1557-8666.
\newblock \doi{10.1089/cmb.2023.0338}.
\newblock URL \url{http://dx.doi.org/10.1089/cmb.2023.0338}.

\bibitem[Weiner and Bansal(2021)]{weiner2021improved}
S.~Weiner and M.~S. Bansal.
\newblock Improved duplication-transfer-loss reconciliation with extinct and
  unsampled lineages.
\newblock \emph{Algorithms}, 14\penalty0 (8):\penalty0 231, 2021.

\bibitem[Wu et~al.(2014)Wu, Rasmussen, Bansal, and Kellis]{wu2014most}
Y.-C. Wu, M.~D. Rasmussen, M.~S. Bansal, and M.~Kellis.
\newblock Most parsimonious reconciliation in the presence of gene duplication,
  loss, and deep coalescence using labeled coalescent trees.
\newblock \emph{Genome research}, 24\penalty0 (3):\penalty0 475--486, 2014.

\bibitem[Zhang(1997)]{zhang1997mirkin}
L.~Zhang.
\newblock On a mirkin-muchnik-smith conjecture for comparing molecular
  phylogenies.
\newblock \emph{Journal of Computational Biology}, 4\penalty0 (2):\penalty0
  177--187, 1997.

\bibitem[Zhang(2011)]{zhang2011gene}
L.~Zhang.
\newblock From gene trees to species trees ii: Species tree inference by
  minimizing deep coalescence events.
\newblock \emph{IEEE/ACM Transactions on Computational Biology and
  Bioinformatics}, 8\penalty0 (6):\penalty0 1685--1691, 2011.

\end{thebibliography}

\end{document}